\DeclareMathOperator{\bL}{\mathbf{K}}
\DeclareMathOperator{\bK}{\mathbf{K}}
\DeclareMathOperator{\bM}{\mathbf{M}}
\newcommand\sfequilibrium{\mbox{S5-equilibrium}\xspace}
\newcommand\kdequilibrium{\mbox{equilibrium}\xspace}
\newcommand\sfinterpretation{\mbox{S5-interpretation}\xspace}
\newcommand\sfmodel{\mbox{S5-model}\xspace}
\newcommand\kdmodel{\mbox{D45-model}\xspace}
\newcommand\sfhtinterpretation{\mbox{\text{S5}-interpretation}\xspace}
\newcommand\sfhtinterpretations{\mbox{\text{S5}-interpretations}\xspace}
\newcommand\wv{\mathbb{W}}
\newcommand\us{\mathbb{S}}
\newcommand\sfhtmodel{\mbox{$\text{S5}_\text{\tiny HT}$-model}\xspace}
\newcommand\kdint[2]{({#2, #1})}
\def\MEQ{{\rm {EQB}}}
\def\WEQ{{\rm {WEQ}}}
\def\EQ{{\rm {EQ}}}
\def\Head{\mathit{Head}}
\def\Body{\mathit{Body}}
\def\Bodym{\mathit{Body}^{sub}}
\def\Bodyr{\mathit{Body}^{obj}}
\def\Bodyp{\mathit{Body}^{+}}
\def\Bodyrp{\mathit{Body}^+_{ob}}
\def\Bodymp{\mathit{Body}^+_{sub}}
\def\SM{\text{\rm SM}}
\def\CL{\text{\rm CL}}
\newcolumntype{L}[1]{>{\raggedright\let\newline\\\arraybackslash\hspace{0pt}}m{#1}}
\newcolumntype{C}[1]{>{\centering\let\newline\\\arraybackslash\hspace{0pt}}m{#1}}
\newcolumntype{R}[1]{>{\raggedleft\let\newline\\\arraybackslash\hspace{0pt}}m{#1}}
\title{Founded World Views with\\ Autoepistemic Equilibrium Logic}
\author{
Pedro Cabalar \inst{1} \and
Jorge Fandinno \inst{2} \and 
Luis Fari{\~n}as \inst{2}
}
\institute{University of Corunna, Spain\\
\email{cabalar@udc.es}
\and University of Toulouse IRIT, CNRS, France\\
\email{\{jorge.fandinno,farinas\}@irit.fr}
}
\begin{document}

\maketitle

\begin{abstract}
\noindent
Defined by Gelfond in 1991 (G91), epistemic specifications (or programs) are an extension of logic programming under stable models semantics that introduces \emph{subjective literals}.
A subjective literal allows checking whether some regular literal is true in all (or in some of) the stable models of the program, being those models collected in a set called \emph{world view}.
One epistemic program may yield several world views but, under the original G91 semantics, some of them resulted from self-supported derivations.
During the last eight years, several alternative approaches have been proposed to get rid of these self-supported world views.
Unfortunately, their success could only be measured by studying their behaviour on a set of common examples in the literature, since no formal property of ``self-supportedness'' had been defined.
To fill this gap, we extend in this paper the idea of unfounded set from standard logic programming to the epistemic case.
We define when a world view is \emph{founded} with respect to some program and propose the \emph{foundedness} property for any semantics whose world views are always founded.
Using counterexamples, we explain that the previous approaches violate foundedness, and proceed to propose a new semantics based on a combination of Moore's Autoepistemic Logic and Pearce's Equilibrium Logic.
The main result proves that this new semantics precisely captures the set of founded G91 world views.
\end{abstract}

\section{Introduction}\label{sec:introduction}

The language of \emph{epistemic specifications}, proposed by Gelfond in 1991~\cite{Gelfond91}, extends disjunctive logic programs (under the \emph{stable model}~\cite{GL88} semantics) with modal constructs called \emph{subjective literals}.
Using these constructs, it is possible to check whether a regular literal $l$ is true in \emph{every} stable model (written $\bK l$) or in \emph{some} stable model (written $\bM l$) of the program.
For instance, the rule:
\begin{eqnarray}
a \leftarrow \neg\!\bK b \label{f:loop1}
\end{eqnarray}
means that $a$ must hold if we cannot prove that all the stable models contain~$b$.
Subjective literals have been incorporated as an extension of the Answer Set Programming (ASP) paradigm~\cite{MT99,Nie99} in different solvers and implementations -- see~\cite{LK18} for a recent survey.
The definition of a ``satisfactory'' semantics for epistemic specifications has proved to be a non-trivial enterprise, as shown by the list of different attempts proposed so far~\cite{CerroHS15,Gelfond91,Gelfond11,Kahl15,ShenE17,Truszczynski11,WangZ05}.
The main difficulty arises because subjective literals query the set of stable models but, at the same time, occur in rules that determine those stable models.
As an example, the program consisting of:
\begin{eqnarray}
b \leftarrow \neg\!\textbf{}\bK a \label{f:loop2}
\end{eqnarray}
and \eqref{f:loop1} has now two rules defining atoms $a$ and $b$ in terms of the presence of those same atoms in all the stable models.
To solve this kind of cyclic interdependence, the original semantics by Gelfond~\cite{Gelfond91} (G91) considered different alternative \emph{world views} or sets of stable models.
In the case of program \eqref{f:loop1}-\eqref{f:loop2}, G91 yields two alternative world views\footnote{For the sake of readability, sets of propositional interpretations are embraced with $\sset{ \ }$ rather than $\set{\ }$.}, $[\{a\}]$ and $[\{b\}]$, each one containing a single stable model, and this is also the behaviour obtained in the remaining approaches developed later on.
The feature that made G91 unconvincing, though, was the generation of self-supported world views.
A prototypical example for this effect is the epistemic program consisting of the single rule:
\begin{eqnarray}
a \leftarrow \bK a \label{f:self}
\end{eqnarray}
whose world views under G91 are $[\emptyset]$ and $[\{a\}]$.
The latter is considered as counter-intuitive by all authors\footnote{This includes Gelfond himself, who proposed a new variant in~\cite{Gelfond11} motivated by this same example and further modified this variant later on in~\cite{Kahl15}.} because it relies on a self-supported derivation: $a$ is derived from $\bK a$ by rule \eqref{f:self}, but the only way to obtain $\bK a$ is rule \eqref{f:self} itself.
Although the rejection of world views of this kind seems natural, the truth is that all approaches in the literature have concentrated on studying the effects on individual examples, rather than capturing the absence of self-supportedness as a formal property.
To achieve such a goal, we would need to establish some kind of \emph{derivability} condition in a very similar fashion as done with \emph{unfounded sets}~\cite{GelderRS91} for standard logic programs.
To understand the similarity, think about the (tautological) rule $a \leftarrow a$.
The classical models of this rule are $\emptyset$ and $\{a\}$, but the latter cannot be a stable model because $a$ is not derivable applying the rule.
Intuitively, an unfounded set is a collection of atoms that is not derivable from a given program and a fixed set of assumptions, as happens to $\{a\}$ in the last example.
As proved by~\cite{LeoneRS97}, the stable models of any disjuntive logic program are precisely its classical models that are \emph{founded}, that is, that do not admit any unfounded set.
As we can see, the situation in~\eqref{f:self} is pretty similar to $a \leftarrow a$ but, this time, involves derivability through subjective literals.
An immediate option is, therefore, extending the definition of unfounded sets for the case of epistemic programs -- this constitutes, indeed, the first contribution of this paper.

Once the property of \emph{founded} world views is explicitly stated, the paper proposes a new semantics for epistemic specifications, called \emph{Founded Autoepistemic Equilibrium Logic} (FAEEL), that fulfills that requirement.
In the spirit of~\mbox{\cite{CerroHS15,WangZ05}}, our proposal actually constitutes a full modal non-monotonic logic where $\bK$ becomes the usual necessity operator applicable to arbitrary formulas.
Formally, FAEEL is a combination of Pearce's \emph{Equilibrium Logic}~\cite{Pearce96}, a well-known logical characterisation of stable models, with Moore's \emph{Autoepistemic Logic} (AEL)~\cite{Moore85}, one of the most representative approaches among modal non-monotonic logics.
The reason for choosing Equilibrium Logic is quite obvious, as it has proved its utility for characterising other extensions of ASP, including the already mentioned epistemic approaches in~\mbox{\cite{CerroHS15,WangZ05}}.
As for the choice of AEL, it shares with epistemic specifications the common idea of \emph{agent's introspection} where $\bL \varphi$ means that $\varphi$ is one of the agent's beliefs.
The only difference is that those beliefs are just classical models in the case of AEL whereas epistemic specifications deal with stable models instead.
Interestingly, the problem of self-supported models has also been extensively studied in AEL~\cite{Konolige88,MarekT89,Niemela91,Schwarz91}, where the formula $\bL a \to a$, analogous to \eqref{f:self}, also yields an unfounded world view\footnote{Technically, AEL is defined in terms of \emph{theory expansions} but each one can be characterised by a canonical $\sfmodel$ with the same form of a world view~\cite{Moore84,schwarz1992minimal}.
}
$[\{a\}]$.
Our solution consists in combining the monotonic bases of AEL and Equilibrium Logic (the modal logic $\text{KD45}$ and the intermediate logic of Here-and-There (HT)~\cite{Hey30}, respectively), but defining a two-step models selection criterion that simultaneously keeps the agent's beliefs as stable models and avoids unfounded world views from the use of the modal operator $\bL$.
As expected, we prove that FAEEL guarantees the property of founded world views, among other features.
Our main result, however, goes further and asserts that the FAEEL world views of an epistemic program are precisely the set of founded G91 world views.
We reach, in this way, an analogous situation to the case of standard logic programming, where stable models are the set of founded classical models of the program.

The rest of the paper is organised as follows.
Section~\ref{sec:background} and~\ref{sec:g91} respectively revisit the background knowledge about equilibrium logic and epistemic specifications necessary for the rest of the paper.
Section~\ref{sec:unfounded} introduces the foundedness property for epistemic logic programs.
In Section~\ref{sec:autoepistemic.equilibrium.logic},
we introduce FAEEL and show that its world views precisely coincide with the set of founded G91 world views.
Finally, Section~\ref{sec:conclusions} concludes the paper.

\section{Background}
\label{sec:background}

We begin recalling the basic definitions of equilibrium logic and its relation to stable models.
We start from the syntax of propositional logic, with formulas built from combinations of atoms in a set $\at$ with operators $\wedge, \vee, \bot$ and $\to$ in the usual way.
We define the derived operators
$\fF \leftrightarrow \fG \eqdef (\fF \to \fG) \wedge (\fG \to \fF)$,
\ $(\fF \leftarrow \fG) \eqdef (\fG \to \fF)$,
\ $\neg \fF \eqdef (\fF \to \bot)$
and
\ $\top \eqdef \neg \bot$.
%

A \emph{propositional interpretation} $T$ is a set of atoms $T \subseteq \at$. 
%
We write $T \models \varphi$ to represent that $T$ classically satisfies formula $\varphi$. 
An \emph{\htinterpretation} is a pair $\tuple{H,T}$ (respectively called ``here'' and ``there'') of propositional interpretations such that $H \subseteq T \subseteq \at$; it is said to be \emph{total} when $H=T$.
We write $\tuple{H,T} \models \varphi$ to represent that $\tuple{H,T}$ \emph{satisfies} a formula~$\varphi$ under the recursive conditions:
\begin{itemize}[ topsep=2pt]
\item $\tuple{H,T} \not\models \bot$ 
\item $\tuple{H,T} \models p$ iff $p \in H$ 
\item $\tuple{H,T} \models \varphi \wedge \psi$ iff $\tuple{H,T} \models \varphi$ and $\tuple{H,T} \models \psi$
\item $\tuple{H,T} \models \varphi \vee \psi$ iff $\tuple{H,T} \models \varphi$ or $\tuple{H,T} \models \psi$
\item $\tuple{H,T} \models \varphi \to \psi$ iff both (i) $T \models \varphi \to \psi$ and (ii) $\tuple{H,T} \not\models \varphi$ or $\tuple{H,T} \models \psi$
\end{itemize}
As usual, we say that $\tuple{H,T}$ is a \emph{model} of a theory~$\Gamma$,
in symbols $\tuple{H,T} \models \Gamma$, iff $\tuple{H,T} \models \varphi$
for all $\varphi \in \Gamma$.
It is easy to see that $\tuple{T,T} \models \Gamma$ iff $T \models \Gamma$ classically.
For this reason, we will identify $\tuple{T,T}$ simply as $T$ and will use `$\models$' indistinctly.
By $\CL[\Gamma]$ we denote the set of all classical models of $\Gamma$.
Interpretation $\tuple{T,T}=T$ is a \emph{stable (or equilibrium) model} of a theory $\Gamma$ iff $T \models \Gamma$ and there is no $H\subset T$ such that $\tuple{H,T} \models \Gamma$.
We write $\SM[\Gamma]$ to stand for the set of all stable models of $\Gamma$.
Note that $\SM[\Gamma] \subseteq \CL[\Gamma]$ by definition.

\section{G91 semantics for epistemic theories}
\label{sec:g91}

In this section we provide a straightforward generalisation of G91 allowing its application to arbitrary modal theories.
Formulas are extended with the necessity operator $\bL$ according to the following grammar:
\[
\fF ::= \bot \mid a \mid \fF_1 \wedge \fF_2 \mid \fF_1 \vee \fF_2 \mid \fF_1 \to \fF_2 \mid \bL \varphi
\qquad\text{ for any atom } a \in \at. 
\]
An \emph{(epistemic) theory} is a set of formulas as defined above.
%
%
In our context, the epistemic reading of $\bL \psi$ is that ``$\psi$ is one of the agent's beliefs.''
Thus, a formula $\varphi$ is said to be \emph{subjective} if all its atom occurrences (having at least one) are in the scope of $\bL$.
Analogously, $\varphi$ is said to be \emph{objective} if $\bL$ does not occur in $\varphi$.
For instance, $\neg\!\bL a \vee \bL b$ is subjective, $\neg a \vee b$ is objective and $\neg a \vee \bL b$ none of the two.

To represent the agent's beliefs we will use a set $\wv$ of propositional interpretations.
We call \emph{belief set} to each element $I \in \wv$ and \emph{belief view} to the whole set~$\wv$.
The difference between belief and knowledge is that the  former may not hold in the real world.
Thus, satisfaction of formulas will be defined with respect to an interpretation $I \subseteq \at$, possibly $I \not\in \wv$, that accounts for the real world: the pair $\kdint{I}{\wv}$ is called \emph{belief interpretation} (or interpretation in modal logic KD45).
Modal satisfaction is also written \mbox{$\kdint{I}{\wv} \models \varphi$} (ambiguity is removed by the interpretation on the left) and follows the conditions:
\begin{itemize}[ topsep=2pt]
\item $\kdint{I}{\wv} \not\models \bot$,
\item $\kdint{I}{\wv} \models a$ iff $a \in I$, for any atom $a \in \at$,
\item $\kdint{I}{\wv} \models \psi_1 \wedge \psi_2$ iff $\kdint{I}{\wv} \models \psi_1$ and $\kdint{I}{\wv} \models \psi_2$,
\item $\kdint{I}{\wv} \models \psi_1 \vee \psi_2$ iff $\kdint{I}{\wv} \models \psi_1$ or $\kdint{I}{\wv} \models \psi_2$,
\item $\kdint{I}{\wv} \models \psi_1 \to \psi_2$ iff $\kdint{I}{\wv} \not\models \psi_1$ or $\kdint{I}{\wv} \models \psi_2$, and
\item $\kdint{I}{\wv} \models \bL \psi$ iff $\kdint{J}{\wv} \models \psi$ for all $J \in \wv$.
\end{itemize}
Notice that implication here is classical, that is, $\varphi \to \psi$ is equivalent to $\neg \varphi \vee \psi$ in this context.
A belief interpretation $\kdint{I}{\wv}$ is a \emph{belief model} of $\Gamma$ iff $\kdint{J}{\wv} \models \varphi$ for all $\varphi \in \Gamma$ and all $J \in \wv \cup \set{I}$
--  additionally, when $I \in \wv$, we further say that $\wv$ is an  \emph{epistemic model} of $\Gamma$ and abbreviate this as $\wv \models \Gamma$.
%
%
Belief models defined in this way correspond to modal logic KD45 whereas epistemic models correspond to S5.

\begin{example}\label{ex:s5models}
Take the theory $\Gamma_{\ref{f:loop1}}=\{\neg\!\bL b \to a\}$ corresponding to rule \eqref{f:loop1}.
An epistemic model $\wv \models \Gamma_{\ref{f:loop1}}$ must satisfy: 
 $\tuple{\wv,J} \models \bL b$ or $\tuple{\wv,J} \models a$, for all $J \in \wv$.
We get three epistemic models from $\bL b$, $\sset{\{b\}}$, $\sset{\{a,b\}}$, and  $\sset{\{b\},\{a,b\}}$ and the rest of cases must force $a$ true, so we also get $\sset{\{a\}}$ and $\sset{\{a\},\{a,b\}}$.
In other words, $\Gamma_{\ref{f:loop1}}$ has the same epistemic models than $\bL b \vee \bL a$.
\qed
\end{example}
Note that rule \eqref{f:loop1} alone did not seem to provide any reason for believing $b$, but we got three epistemic models above satisfying $\bL b$.
Thus, we will be interested only in some epistemic models (we will call \emph{world views}) that minimize the agent's beliefs in some sense.
To define such a minimisation we rely on the following syntactic transformation provided by~\cite{Truszczynski11}.

\begin{definition}[Subjective reduct]
The \emph{subjective reduct} of a theory~$\Gamma$ with respect to a belief view~$\wv$, also written $\Gamma^\wv$, is obtained by replacing each maximal subformula of the form $\bL \varphi$ by: $\top$, if $\wv \models \bL \varphi$; by $\bot$, otherwise.
Notice that $\Gamma^\wv$ is a classical, non-modal theory.\qed
\end{definition}
Finally, we impose a fixpoint condition where, depending on whether each belief set $I \in \wv$ is required to be a stable model of the reduct or just a classical model, we get G91 or AEL semantics, respectively.

\begin{definition}[AEL and G91 world views]
A belief view~$\wv$ is called an \emph{\mbox{AEL-world} view} of a theory 
$\Gamma$ iff
$\wv=\CL[\Gamma^\wv]$,
and is called a 
\emph{G91-world view} of $\Gamma$ iff
\mbox{$\wv=\SM[\Gamma^\wv]$}.\qed
\end{definition}

\begin{example}[Example~\ref{ex:s5models} revisited]
Take any $\wv$ such that $\wv \models \bL b$.
Then, $\Gamma_{\ref{f:loop1}}^\wv = \{\bot \to a\}$ with $\CL[\Gamma_{\ref{f:loop1}}^\wv]=[\emptyset,\{a\},\{b\},\{a,b\}]$ and $\SM[\Gamma_{\ref{f:loop1}}^\wv]=[\emptyset]$.
None of the two satisfy $\bL b$ so $\wv$ cannot be fixpoint for G91 or AEL.
If $\wv \not\models \bL b$ instead, we get $\Gamma_{\ref{f:loop1}}^\wv = \{\top \to a\}$,
whose classical models are $\{a\}$ and $\{a,b\}$, but only the former is stable.
As a result, $\wv=\sset{\{a\},\{a,b\}}$ is the unique AEL world view and $\wv=\sset{\{a\}}$ the unique G91 world view.
\qed
\end{example}

\begin{example}\label{ex:self-supporting.rule}
Take now the theory $\Gamma_{\ref{f:self}}=\{\bL a \to a\}$ corresponding to rule \eqref{f:self}.
If $\wv \models \bL a$ we get $\Gamma_{\ref{f:self}}^\wv = \set{ \top \to a }$ and $\CL[\Gamma_{\ref{f:self}}^\wv]=\SM[\Gamma_{\ref{f:self}}^\wv]=\{a\}$ so $\wv=\sset{\{a\}}$ is an AEL and G91 world view.
If $\wv \not\models \bL a$, the reduct becomes $\Gamma_{\ref{f:self}}^\wv = \set{ \bot \to a }$, a classical tautology with unique stable model $\emptyset$.
As a result, $\wv=\sset{\emptyset,\{a\}}$ is the other AEL world view, while $\wv=\sset{\emptyset}$ is the a second G91 world view.
\qed
\end{example}

As we can see, the difference between AEL and G91 is that we use classical $\CL[\Gamma^\wv]$ instead of stable $\SM[\Gamma^\wv]$ models, respectively.
It is well known that adding the excluded middle axiom $a \vee \neg a$ for all atoms makes equilibrium logic collapse into classical logic. 
This leads us to the following result.
\begin{theorem}
$\wv$ is an AEL world view of some theory~$\Gamma$ iff $\wv$ is a G91-world view of $\Gamma \cup \{a \vee \neg a \mid a \in \at\}$.\qed
\end{theorem}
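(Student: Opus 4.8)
The plan is to reduce everything to the observation recalled just above: adding the excluded-middle axioms $a \vee \neg a$ collapses equilibrium logic onto classical logic. Write $\Gamma' \eqdef \Gamma \cup \{a \vee \neg a \mid a \in \at\}$. The only two things to check are (i) that passing from $\Gamma$ to $\Gamma'$ only adds the excluded-middle axioms to every subjective reduct, and (ii) that computing stable models of a classical theory together with those axioms is the same as computing its classical models.

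For (i): each formula $a \vee \neg a$ is objective, hence contains no maximal subformula of the form $\bL\varphi$ and is left untouched by the subjective reduct. Since the reduct distributes over set union, and since the value of ``$\wv \models \bL\varphi$'' used in the reduct depends only on $\wv$ (not on whether one started from $\Gamma$ or $\Gamma'$), I get, for every belief view $\wv$,
\[
{\Gamma'}^\wv \;=\; \Gamma^\wv \cup \{a \vee \neg a \mid a \in \at\},
\]
where $\Gamma^\wv$ is, as always, a classical non-modal theory over the atoms $\at$. For (ii): I would prove that for every classical theory $\Delta$ over $\at$,
\[
\SM\bigl[\,\Delta \cup \{a \vee \neg a \mid a \in \at\}\,\bigr] \;=\; \CL[\Delta].
\]
The crux is that an \htinterpretation $\tuple{H,T}$ satisfies $a \vee \neg a$ for all $a \in \at$ iff $H = T$: indeed $\tuple{H,T}\models a$ iff $a \in H$ while $\tuple{H,T}\models\neg a$ iff $a \notin T$, so the disjunction holding for every atom forces $T \subseteq H$, i.e.\ (with $H\subseteq T$) $H=T$; conversely $\tuple{T,T}$ always satisfies $a\vee\neg a$. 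Hence no proper $H \subsetneq T$ can be an HT-model of $\Delta \cup \{a\vee\neg a\}$, so the stable models of $\Delta \cup \{a\vee\neg a\}$ are exactly its classical models, which — each $a\vee\neg a$ being a classical tautology — are exactly $\CL[\Delta]$.

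Chaining (i) and (ii): for every belief view $\wv$ we have $\wv = \CL[\Gamma^\wv]$ iff $\wv = \SM[\Gamma^\wv \cup \{a \vee \neg a \mid a \in \at\}]$ iff $\wv = \SM[{\Gamma'}^\wv]$, which is precisely the statement that $\wv$ is an AEL world view of $\Gamma$ iff $\wv$ is a G91-world view of $\Gamma'$. I do not expect a genuine obstacle here; the only points demanding a little care are the bookkeeping in step~(i) — that the reduct commutes with union and ignores objective formulas — and making sure that step~(ii) is applied to the propositional theory $\Gamma^\wv$ taken over exactly the atom set $\at$, so that the full family of axioms $a\vee\neg a$ is indeed present in ${\Gamma'}^\wv$.
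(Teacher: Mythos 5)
Your proof is correct and follows exactly the route the paper intends: the paper states this theorem without an explicit proof, appealing only to the well-known fact that adding excluded middle collapses equilibrium logic into classical logic, and your two steps (the reduct leaves the objective axioms $a \vee \neg a$ untouched and commutes with union; $\SM[\Delta \cup \{a\vee\neg a \mid a\in\at\}] = \CL[\Delta]$ because $\tuple{H,T}\models a\vee\neg a$ for all $a$ forces $H=T$) supply precisely the missing bookkeeping. No gaps.
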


\section{Founded world views of epistemic specifications}
\label{sec:unfounded}

As we explained in the introduction, world view $\sset{\{a\}}$ of $\{\bL a \to a\}$ is considered to be ``self-supported'' in the literature but, unfortunately, there is no formal definition for such a concept, to the best of our knowledge.
To cover this lack, we proceed to extend here the idea of unfounded sets from disjunctive logic programs to the epistemic case.
For this purpose, we focus next on the original language of \emph{epistemic specifications}~\cite{Gelfond91} (a fragment of epistemic theories closer to logic programs) on which most approaches have been actually defined.

Let us start by introducing some terminology.
An \emph{objective literal} is either an atom~$a \in \at$, its negation~$\neg a$ or its double negation~$\neg\neg a$.
A \emph{subjective literal} is any of the formulas\footnote{We focus here on the study of the operator~$\bL$, but epistemic specifications also allow a second operator $\bM l$
whose relation to $\bL$ is also under debate and, for this reason, we leave it future work.} $\bL l$, $\neg\! \bL l$ or $\neg \neg\! \bL l$ where $l$ an objective literal.
A \emph{literal} is either an objective or a subjective literal, and is called \emph{negative} if it contains negation and \emph{positive} otherwise.
A \emph{rule} is a formula of the form
\begin{gather}
a_1 \vee \dotsc \vee a_n \leftarrow B_1 \wedge \dotsc \wedge B_m \label{f:rule}
\end{gather}
with $n\geq 0$, $m \geq 0$ and $m+n>0$, where each $a_i$ is an atom and each $B_j$ is a literal.
For any rule $r$ like \eqref{f:rule}, we define its body as $\Body(r) \eqdef B_1 \wedge \dotsc \wedge B_m$ and its head $\Head(r) \eqdef a_1 \vee \dotsc \vee a_n$, which we sometimes use as the set of atoms $\{a_1,\dotsc,a_n\}$.
When $n=0$, $\Head(r)=\bot$ and the rule is a \emph{constraint}, whereas if $m=0$ then $\Body(r)=\top$ and the rule is a \emph{fact}.
%
The set $\Bodyrp(r)$ collects all atoms occurring in positive objective literals in the body while $\Bodymp(r)$ collects all atoms occurring in positive subjective literals. 
An \emph{epistemic specification} or \emph{program} is a set of rules.
As with formulas, a program without occurrences of~$\bL$ is said to be objective (it corresponds to a standard disjunctive logic program with double negation).

\begin{definition}[Unfounded set]\label{def:unfoundedset}
Let $\Pi$ be a program and $\wv$ a belief view.
An \emph{unfounded set} $\us$ with respect to $\Pi$ and $\wv$ is a non-empty set of pairs where, for each $\tuple{X,I} \in \us$, we have that $X$ and $I$ are sets of atoms and there is no rule $r \in \Pi$ with $\Head(r) \cap X \neq \emptyset$ satisfying:
\begin{enumerate}[itemsep=3pt,topsep=2pt]
\item $\kdint{I}{\wv} \models \Body(r)$
	\label{item:1:def:unfounded}
\item $\Bodyrp(r) \cap X = \emptyset$
	\label{item:2:def:unfounded}
\item $(\Head(r) \setminus X) \cap I = \emptyset$
	\label{item:3:def:unfounded}
\item $\Bodymp(r) \cap Y = \emptyset$ with $Y = \bigcup \setm{ X' }{ \tuple{X',I'} \in \us }$.\qed
	\label{item:4:def:unfounded}
\end{enumerate} 
\end{definition}
The definition works in a similar way to standard unfounded sets~\cite[Definition~3.1]{LeoneRS97}.
In fact, the latter corresponds to the first three conditions above, except that we use $\kdint{I}{\wv}$ to check $\Body(r)$, as it may contain now subjective literals.
Intuitively, each $I$ represents some potential belief set (or stable model) and $X$ is some  set of atoms without a ``justifying'' rule, that is, there is no $r \in \Pi$ allowing a positive derivation of atoms in $X$.
A rule like that should have a true $\Body(r)$ (condition~\ref{item:1:def:unfounded}) but not because of positive literals in $X$ (condition~\ref{item:2:def:unfounded}) and is not used to derive other head atoms outside $X$ (condition~\ref{item:3:def:unfounded}).
The novelty in our definition is the addition of condition~\ref{item:4:def:unfounded}: to consider $r$ a justifying rule, we additionally require not using any positive literal $\bL a$ in the body such that atom $a$ also belongs to any of the unfounded components $X'$ in $\us$.

\begin{definition}[Founded world view]\label{def:unfounded}
Let $\Pi$ be a program and $\wv$ be a belief view.
We say that $\wv$ is \emph{unfounded} if there is some 
unfounded-set~$\us$ s.t.,
for every $\tuple{X,I} \in \us$,
we have $I \in \wv$ and $X \cap I \neq \emptyset$.
$\wv$ is called \emph{founded} otherwise.\qed
\end{definition}

When $\Pi$ is an objective program, each pair $\tuple{X,I}$ corresponds to a standard unfounded set $X$ of some potential stable model~$I$ in the traditional sense of~\cite{LeoneRS97}.

\begin{example}\label{ex:or}
Given the single disjunctive rule $a \vee b$ suppose we check the (expected) world view $\wv=\sset{\{a\},\{b\}}$.
For $I=\{a\}$ and $X=\{a\}$, rule $a \vee b$ satisfies the four conditions and justifies $a$.
The same happens for $I=\{b\}=X$.
So, $\wv$ is founded.
However, suppose we try with $\wv'=\sset{\{a,b\}}$ instead. 
For $I=\{a,b\}$ we can form $X=\{a\}$ and $X'=\{b\}$ and in both cases, the only rule in the program, $a \vee b$, violates condition~\ref{item:3:def:unfounded}.
As a result, $\wv'$ is unfounded due to the set $\us'=\{\tuple{\{b\},\{a,b\}},\tuple{\{a\},\{a,b\}}\}$.\qed
\end{example}

To illustrate how condition 4 works, let us continue with Example~\ref{ex:self-supporting.rule}.
\begin{examplecont}{ex:self-supporting.rule}\label{ex:self-supporting.rule2}
Theory $\Gamma_{\ref{f:self}}=\{\bL a \to a\}$ is also a program.
Given belief set $\wv=\sset{\{a\}}$ we can observe that $\us=\sset{\tuple{\set{a},\set{a}}}$ makes $\wv$ unfounded because the unique rule in $\Gamma_{\ref{f:self}}$ does not fulfill condition 4: we cannot derive $a$ from a rule that contains $a \in \Bodymp(r)$.
On the other hand, the other G91 world view, $\wv=\sset{\emptyset}$, is trivially founded.\qed
\end{examplecont}

%

Since Definition~\ref{def:unfounded} only depends on some epistemic program and its selected world views, we can raise it to a general property for any epistemic semantics.

\begin{property}[Foundedness]\label{property:unfounded-freedom}
A semantics satisfies \emph{foundedness} when all the world views it assigns to any program $\Pi$ are founded. \qed
\end{property}

Approaches proposed after G91 do remove unfounded world views in the examples studied in the literature, but unfortunately, this does not mean that they generally satisfy foundedness.
Let us consider a common counterexample.
\begin{example}\label{ex:unfounded}
Take the epistemic logic program\newprogramhide\label{theory:larger.set.of.worlds}:
\begin{gather*}
a \vee b
\hspace{2cm}
a \leftarrow \bL b
\hspace{2cm}
b \leftarrow \bL a
\tag{$\program\ref{theory:larger.set.of.worlds}$}
\end{gather*}
whose G91-world views are
$\wv = \sset{ \set{a}, \set{b} }$
and
$\wv' = \sset{ \set{a , b} }$.
These are, indeed, the two
cases we analysed in Example~\ref{ex:or}.
$\wv$ is again founded because $a \vee b$ keeps justifying both possible $\tuple{X,I}$ pairs, that is, $\sset{\tuple{\set{a},\set{a}}}$ and $\sset{\tuple{\set{b},\set{b}}}$.
However, for $\wv'$ we still have the unfounded set $\us' = \sset{ \tuple{\set{a},\set{a,b}}, \, \tuple{\set{b}, \set{a,b}}}$
which violates condition~\ref{item:3:def:unfounded} for the first rule as before, but also condition~\ref{item:4:def:unfounded} for the other two rules.\qed
\end{example}
Note how $\us'$ allows us to spot the root of the derivability problem: to justify $a$ in $\tuple{\set{a},\set{a,b}}$ we cannot use $a \leftarrow \bL b$ because $b$ is part of the unfounded structure $X$ in the other pair $\tuple{\{b\},\{a,b\}}$, and vice versa.
Since the variants by Gelfond in~\cite{Gelfond11} (G11) and Kahl et al.~\cite{Kahl15} (K15) also assign the unfounded world view $\wv'$ to $\program\ref{theory:larger.set.of.worlds}$ (in fact, they coincide with G91 for this program), we can conclude that G11 and K15 \emph{do not satisfy foundedness} either.

A more elaborated strategy is adopted by the recent approaches by Fari\~nas et al. ~\cite{CerroHS15} (F15) and Shen and Eiter~\cite{ShenE17} (S17), that treat the previous world views as candidate solutions\footnote{In~\cite{CerroHS15}, these candidate world views are called \emph{epistemic equilibrium models} while selected world views receive the name of \emph{autoepistemic equilibrium models}.}, but select the ones with \emph{minimal knowledge} in a second step.
This allows removing the unfounded world view $\sset{\{a,b\}}$ in Example~\ref{ex:unfounded}, because the other solution $\sset{\{a\},\{b\}}$ provides less knowledge.
Unfortunately, this strategy does not suffice to guarantee foundedness, since other formulas (such as  constraints) may remove the founded world view, as explained below.

\begin{examplecont}{ex:unfounded}\label{ex:unfounded2}
Take the program 
\mbox{$\newprogram\label{theory:larger.set.of.worlds2} = \program\ref{theory:larger.set.of.worlds} \cup \set{ \bot \leftarrow \neg\!\bL a}$}.
The constraint rules out $\wv=\sset{\{a\},\{b\}}$ because the latter satisfies $\neg\!\bL a$.
In  G91, G11, F15 and S17, only world view $\wv'=\sset{\{a,b\}}$ is left, so knowledge minimisation has no effect.
However, $\wv'$ is still unfounded in $\program\ref{theory:larger.set.of.worlds2}$ since constraints do not affect that feature (their empty head never justifies any atom).\qed
\end{examplecont}

As a conclusion, semantics F15 and S17 \emph{do not satisfy foundedness} either.

\section{Founded Autoepistemic Equilibrium Logic}\label{sec:autoepistemic.equilibrium.logic}

We present now the semantics proposed in this paper, introducing \emph{Founded Autoepistemic Equilibrium Logic} (FAEEL).
The basic idea is an elaboration of the belief (or KD45) interpretation $\kdint{I}{\wv}$ already seen but replacing belief sets by HT pairs.
Thus, we extend now the idea of \emph{belief view} $\wv$ to a non-empty set of HT-interpretations $\wv=\{\tuple{H_1,T_1}, \dots, \tuple{H_n,T_n}\}$ and  say that $\wv$ is \emph{total} when $H_i=T_i$ for all of them, coinciding with the form of belief views $\wv=\{T_1, \dots, T_n\}$ we had so far.
Similarly, a \emph{belief interpretation} is now redefined as $\kdint{\tuple{H,T}}{\wv}$, or simply $\kdint{H,T}{\wv}$, where $\wv$ is a belief view and $\tuple{H,T}$ stands for the real world, possibly not in $\wv$.
Next, we redefine the satisfaction relation from a combination of modal logic KD45 and HT.
A belief interpretation $\cI =ÃÂ \kdint{H,T}{\wv}$ satisfies a formula $\varphi$, written $\cI \models \varphi$, iff:
\begin{itemize}[topsep=2pt]
\item $\cI \not\models \bot$,
\item $\cI \models a$ iff $a \in H$, for any atom $a \in \at$,
\item $\cI \models \psi_1 \wedge \psi_2$ iff $\cI \models \psi_1$ and $\cI \models \psi_2$,
\item $\cI \models \psi_1 \vee \psi_2$ iff $\cI \models \psi_1$ or $\cI \models \psi_2$,
\item $\cI \models \psi_1 \to \psi_2$ iff both: (i) $\cI \not\models \psi_1$ or $\cI \models \psi_2$;
and (ii) $\kdint{T}{\wv^t} \not\models \psi_1$ or $\kdint{T}{\wv^t} \models \psi_2$, where $\wv^t=\{T_i \mid \tuple{H_i,T_i} \in \wv\}$.
\item $\cI \models \bL \psi$ iff $\kdint{H_i,T_i}{\wv} \models \psi$ for all $\tuple{H_i,T_i} \in \wv$.
\end{itemize}
For total belief interpretations, this new satisfaction relation collapses to the one in Section~\ref{sec:g91} (that is, KD45).
Interpretation $\kdint{H,T}{\wv}$ is a \emph{belief model} of $\Gamma$ iff $\kdint{H_i,T_i}{\wv} \models \varphi$ for all $\tuple{H_i,T_i} \in \wv \cup\{\tuple{H,T}\}$ and all $\varphi \in \Gamma$
 -- 
additionally, when $\tuple{H,T} \in \wv$, we further say that $\wv$ is an \emph{epistemic model} of  $\Gamma$, abbreviated as $\wv \models \Gamma$.

\begin{Proposition}[Persistence]{\label{prop:persistance}}
$\kdint{H,T}{\wv}\models \varphi$ implies $\kdint{T}{\wv^t} \models \varphi$.\qed
\end{Proposition}

A belief model just captures collections of HT models which need not be in equilibrium.
To make the agent's beliefs correspond to stable models we impose a particular minimisation criterion on belief models.

\begin{definition}\label{def:int.prec}
We define the partial order $\cI' \preceq \cI$ for belief interpretations $\cI'=\kdint{H',T'}{\wv'}$ and $\cI=\kdint{H,T}{\wv}$ when the following three conditions hold:

\begin{enumerate}
\item $T' = T$ and $H' \subseteq H$, and

\item for every $\tuple{H_i,T_i} \in \wv$,
there is some $\tuple{H'_i,T_i} \in \wv'$,
with $H'_i \subseteq H_i$.

\item for every $\tuple{H'_i,T_i} \in \wv'$,
there is some $\tuple{H_i,T_i} \in \wv$,
with $H'_i \subseteq H_i$.\qed
\end{enumerate}
\end{definition}
As usual, $\cI' \prec \cI$ means $\cI' \preceq \cI$ and $\cI' \neq \cI$.
The intuition for $\cI' \preceq \cI$ is that $\cI'$ contains less information than $\cI$ for each fixed $T_i$ component.
As a result, $\cI' \models \varphi$ implies $\cI \models \varphi$ for any formula $\varphi$ without implications other than $\neg \psi = \psi \to \bot$.

\begin{definition}
A total belief interpretation $\cI=\kdint{T}{\wv}$ is said to be an \emph{equilibrium belief model} of some theory~$\Gamma$ iff $\cI$ is a belief model of $\Gamma$ and there is no other belief model $\cI'$ of $\Gamma$ such that $\cI' \prec \cI$.\qed
\end{definition}
By $\MEQ[\Gamma]$ we denote the set of equilibrium belief models of $\Gamma$.
As a final step, we impose a fixpoint condition to minimise the agent's knowledge as follows.
\begin{definition}\label{def:au-eqmodel}
A belief view~$\wv$ is called an \emph{equilibrium world view} of~$\Gamma$
iff:
\begin{IEEEeqnarray*}{c+x*}
\wv \ \ = \ \ \setm{ T }{ \kdint{T}{\wv } \in \MEQ[ \Gamma ] }
&\qed
\end{IEEEeqnarray*}
\end{definition}

\begin{examplecont}{ex:self-supporting.rule2}\label{ex:self-supporting.rule3}
Back to~\mbox{$\Gamma_{\ref{f:self}}=\{\bL a \to a\}$}, remember its unique founded G91-world view was~$\sset{\emptyset}$.
It is easy to see that $\cI = \kdint{\emptyset}{\sset{\emptyset}} \in \MEQ[\Gamma_{\ref{f:self}}]$ because $\kdint{\emptyset}{\sset{\emptyset}} \models \Gamma_{\ref{f:self}}$ and no smaller belief model can be obtained.
Moreover, $\sset{\emptyset}$ is an equilibrium world view of~$\Gamma_{\ref{f:self}}$ since no other
\mbox{$T \not\in \sset{\emptyset}$}
satisfies
\mbox{$\kdint{T}{\sset{\emptyset}} \in \MEQ[\Gamma_{\ref{f:self}}]$}.
The only possibility is $\kdint{\set{a}}{\sset{\emptyset}}$ but it fails because there is a smaller belief model
$\kdint{\emptyset,\set{a}}{\sset{\emptyset}}$
satisfying \mbox{$\bL a \to a$}.
As for the other potential world view $\sset{\set{a}}$, it is not in equilibrium: we already have $\cI' = \kdint{\set{a}}{\sset{\set{a}}} \not\in \MEQ[\Gamma_{\ref{f:self}}]$
because the smaller interpretation $\cI ''=\kdint{\set{a},\set{a}}{\sset{\tuple{\emptyset,\set{a}}}}$ also satisfies $\Gamma_{\ref{f:self}}$.
In particular, note that $\cI'' \not\models \bL a$
and, thus, clearly satisfies \mbox{$\bL a \to a$.
\qed}
\end{examplecont}

The logic induced by equilibrium world views is called \emph{Founded Autoepistemic  Equilibrium Logic} (FAEEL).
A first important property is:
\begin{Theorem}{\label{thm:unfounded-free}}
FAEEL satisfies foundedness.\qed
\end{Theorem}

A second interesting feature is that equilibrium world views are also G91-world views though the converse may not be the case (as we just saw in Example~\ref{ex:self-supporting.rule3}).
This holds, not only for programs, but in general for any  theory:

\begin{Theorem}{\label{thm:g91}}
For any theory $\Gamma$, its equilibrium world views are also G91-world views of $\Gamma$.\qed
\end{Theorem}

In other words, FAEEL is strictly stronger than G91, something that, as we see next, is not the case in other approaches in the literature.

\begin{example}\label{ex:nog91}
The following program\newprogramhide\label{prg:nog91}:
\begin{gather*}
a \vee b
\hspace{2cm}
c \leftarrow \bL a
\hspace{2cm}
\bot \leftarrow \neg c
	\tag{\program\ref{prg:nog91}}
\end{gather*}
has no G91-world views, but according to G11, K15, F15 and S17 has world view~$\sset{\set{a,c}}$.
This example was also used in~\cite{CabalarFF2018} to show that these semantics do not satisfy another property, called there \emph{epistemic splitting}.\qed
\end{example}

\begin{examplecont}{ex:unfounded}\label{ex:unfounded3}
Take again program~$\program\ref{theory:larger.set.of.worlds}$ whose G91-world views were $\wv=\sset{\set{a},\set{b}}$ and $\wv'=\sset{\set{a, b}}$.
Since $\wv'$ is unfounded, it cannot be an equilibrium world view (Theorem~\ref{thm:unfounded-free}), leaving $\wv$ as the only candidate (Theorem~\ref{thm:g91}).
Let us check that this is in fact an equilibrium world view.
First, note that
$\cI = \kdint{\set{a}}{\sset{\set{a},\set{b}}} \in \MEQ[\program\ref{theory:larger.set.of.worlds}]$ 
because there is no model $\cI'$ of $\program\ref{theory:larger.set.of.worlds}$
such that $\cI' \preceq \cI$.
In fact, it is easy to see that
$\kdint{H_3,\set{a}}{\sset{\tuple{H_1,\set{a}},\tuple{H_2,\set{b}}}}$ is not a model of the rule $a \vee b$ if $H_i = \emptyset$ for any $i \in \set{1,2,3}$.
Symmetrically, we have that
$\cI' = \kdint{\set{b}}{\sset{\set{a},\set{b}}} \in \MEQ[\program\ref{theory:larger.set.of.worlds}]$ too.
Finally, we have to check that no other $T \not\in \sset{\{a\},\{b\}}$ can form an equilibrium belief model.
For the case $T=\emptyset$, it is easy to check that
$\kdint{\emptyset}{\sset{\set{a},\set{b}}}$
does not satisfy $a \vee b$.
For $T=\{a,b\}$,
we have that
$\cI'' = \kdint{\set{a,b}}{\sset{\set{a},\set{b}}} \not\in \MEQ[\program\ref{theory:larger.set.of.worlds}]$ because, for instance, the smaller $\cI''' = \kdint{\set{a},\set{a,b}}{\sset{\set{a},\set{b}}}$  is a model of $\program\ref{theory:larger.set.of.worlds}$.
\qed
\end{examplecont}

Theorems~\ref{thm:unfounded-free} and~\ref{thm:g91} assert that any equilibrium world view is a founded G91-world view.
The natural question is whether the opposite also holds.
In Examples~\ref{ex:self-supporting.rule3}, \ref{ex:nog91} and~\ref{ex:unfounded3} we did not find any counterexample, and this is in fact a general property, as stated below.

\begin{mtheorem}
Given any program~$\Pi$, its equilibrium world views coincide with its founded G91-world views.\qed
\end{mtheorem}

\begin{figure}[t]
\small
\begin{tabular}{ | C{1.75cm} | C{2cm}  | }
\hline
program & world views
\\\hline
\multirow{2}{*}{$a \vee b$} &   \multirow{2}{*}{ \sset{\set{a} , \set{b}} }
\\
&
\\\hline
$a \vee b$ &  \multirow{2}{*}{\sset{\set{a} , \set{b}} }
\\
$a \leftarrow \bK b$ &
\\\hline
$a \vee b$ & \multirow{2}{*}{\sset{\set{a} }  }
\\
$a \leftarrow \neg\!\bK b$ & 
\\\hline
$a \vee b$ & \multirow{2}{*}{\sset{\set{a,c}, \set{b,c} } }
\\
$c \leftarrow \neg\!\bK b$ & 
\\\hline
$a \leftarrow \neg\!\bK b$ & \multirow{2}{*}{\sset{\set{a}} , \sset{\set{b} } }
\\
$b \leftarrow \neg\!\bK a$ & 
\\\hline
$a \leftarrow \neg\!\bK\!\neg a$ &  \multirow{2}{*}{\sset{\set{a}}  }
\\
$a \leftarrow \neg\!\bK a$ &
\\\hline\multicolumn{2}{l}{}
\\\multicolumn{2}{l}{}
\\
\end{tabular}
\hspace{0.6cm}
\begin{tabular}{ | C{2cm} | C{2.8cm} | C{2.2cm} | }
\hline
program & G91/G11/FAEEL & K15/F15/S17
\\\hline
\multirow{2}{*}{
$a \leftarrow \neg\!\bK\!\neg a$} & 
\multirow{2}{*}{\sset{\emptyset } , \sset{\set{a}} }
	&
\multirow{2}{*}{\sset{\set{a} }}
\\&&
\\\hline
$a \vee b$ & \multirow{2}{*}{ \text{none} }
&
 \multirow{2}{*}{\sset{\set{a} } } 
\\
$a \leftarrow \neg\!\bK\!\neg b$ & 
 
&
 
\\\hline
$a \vee b$ &  \multirow{2}{*}{ \sset{\set{a}} ,  \sset{\set{a} , \set{b} } }
&
 \multirow{2}{*}{ \sset{\set{a} , \set{b} } } 
\\
$a \leftarrow \bK\!\neg b$ &   
&

\\\hline
$a \leftarrow b$ &
 \multirow{2}{*}{\sset{\emptyset} , \sset{\set{a,b}}  } 
&
 \multirow{2}{*}{\sset{\set{a,b}}  } 
\\
$b \leftarrow \neg\!\bK\!\neg a$ &

&

\\\hline
$a \leftarrow \neg\!\bK\!\neg b$ &  
\multirow{2}{*}{\sset{\emptyset} , \sset{\set{a} , \set{b} } }
&
\multirow{2}{*}{\sset{\set{a} , \set{b} } }
\\
$b \leftarrow \neg\!\bK\!\neg a$ &
&
\\\hline\multicolumn{3}{c}{}
\\\multicolumn{3}{c}{}
\\\multicolumn{3}{c}{}
\\\multicolumn{3}{c}{}
\end{tabular}
\vspace{-1cm}

\caption{On the left, examples where G91, G11, K15, F15, S17 and FAEEL agree. On the right, examples where FAEEL/G91/G11 differ from K15/F15/S17.}\label{table:agreement}
\end{figure}

An interesting observation is that in all the original examples of epistemic specifications~\cite{Gelfond91,GelfondP93} used by Gelfond to introduce G91, modal operators occurred in the scope of negation.
Negated beliefs never incur unfoundedness, so this feature could not be spotted using this family of examples.
In fact, under this syntactic restriction, FAEEL and G91 coincide.

\begin{Proposition}{\label{prop:neg-L}}
For any theory where all occurrences of $\bL$ are in the scope of negation, we have that the equilibrium world views and the G91-world views coincide.\qed
\end{Proposition}

Proposition~\ref{prop:neg-L} also holds for semantics~\cite{Truszczynski11,WangZ05} that are conservative extensions of G91, as well as for G11.
Apart from foundedness, \cite{CabalarFF2018} recently proposed other four properties for semantics of epistemic specifications.
We analyse here three of them, omitting the so-called \emph{epistemic splitting} due to lack of space.
\begin{enumerate}
\item 
\emph{supra-ASP} holds when, for any objective theory $\Gamma$, either $\Gamma$ has a unique world view $\wv=\SM[\Gamma] \neq \emptyset$ or $\SM[\Gamma]=\emptyset$ and $\Gamma$ has no world view. 

\item
\emph{supra-S5} holds when every world view $\wv$ of a theory $\Gamma$ is also an \sfmodel of~$\Gamma$ (that is, $\wv \models \Gamma$).

\item
\emph{subjective constraint monotonicity} holds when, for any theory $\Gamma$ and any subjective constraint \mbox{$\bot\leftarrow\varphi$},
we have that $\wv$ is a world view of $\Gamma \cup \{\bot\leftarrow\varphi\}$ iff both $\wv$ is a world view of $\Gamma$ and $\wv$ is not an \sfmodel of~$\varphi$.
\end{enumerate}

\begin{Proposition}{\label{prop:properties}}
FAEEL satisfies supra-ASP, 
supra-S5 and
subjective constraint monotonicity.\qed
\end{Proposition}
All semantics discussed in this paper satisfy the above first two properties but most of them fail for subjective constraint monotonicity, as first discussed in~\cite{KahlL18}.
In fact, a variation of Example~\ref{ex:nog91} can be used to show that K15, F15 and S17 do not satisfy 
this property.





%

\begin{examplecont}{ex:nog91}\label{ex:nog91b}
Suppose we remove the constraint (last rule) from~\program\ref{prg:nog91} getting the program\newprogramhide\label{prg:nog91b}
$\program\ref{prg:nog91b} = \set{ a \vee b
\,,\,
c \leftarrow \bL a}$.
%
All semantics, including G91 and FAEEL, agree that~\program\ref{prg:nog91b} has a unique world view $\sset{\set{a},\set{b}}$.
Suppose we add now a subjective constraint 
$\newprogram\label{prg:nog91c} = \program\ref{prg:nog91b} \cup \set{ \bot \leftarrow \neg\!\bL c }$.
This addition leaves G91 and FAEEL without world views (due to subjective constraint monotonicity)
the same happens for G11, but not for K15, F15 and S17, which provide a \emph{new} world view~$\sset{\set{a,c}}$ not obtained before adding the subjective constraint.\qed
\end{examplecont}

Tables~\ref{table:agreement} and~\ref{table:disagreement} show a list of examples taken from Table~4 in~\cite{CerroHS15} and their world views according to different semantics.
\begin{figure}[t]
\small
\begin{tabular}{ | C{3.8cm} | C{1.95cm} | C{2.1cm} | C{1.85cm} | C{1.9cm} | }
\hline
program & G91 & G11/FAEEL & K15 & F15/S17
\\\hline
$a \leftarrow \neg\!\bK\!\neg b \wedge \neg b$ &  \multicolumn{3}{c|}{\multirow{2}{*}{\sset{\emptyset} , \sset{\set{a} , \set{b} } }}
& \multirow{2}{*}{$\sset{ \set{a}, \set{b} }$}
\\
$b \leftarrow \neg\!\bK\!\neg a \wedge \neg a$ & \multicolumn{3}{c|}{}
&
\\\hline
$a \leftarrow \bK a$ & 
	\sset{\emptyset } , \sset{\set{a}}
	&
	\multicolumn{3}{c|}{ \sset{\emptyset } }
\\\hline
$a \leftarrow \bK a$ & \multirow{2}{*}{ \sset{\set{a}} }
	& \multicolumn{3}{c|}{ \multirow{2}{*}{  \text{none} } }
\\
$a \leftarrow \neg\!\bK a$ & & \multicolumn{3}{c|}{}
\\\hline
\end{tabular}
\caption{Examples splitting different semantics.
Examples~\ref{ex:unfounded} and~\ref{ex:nog91} in the paper can be used to further split FAEEL and G11.}\label{table:disagreement}
\end{figure}


\section{Conclusions}\label{sec:conclusions}

In order to characterise self-supported world-views, already present in Gelfond's 1991 semantics~\cite{Gelfond91} (G91), we have extended the definition of unfounded sets from standard logic programs to epistemic specifications.
As a result, we proposed the \emph{foundedness} property for epistemic semantics, which is not satisfied by other approaches in the literature.
Our main contribution has been the definition of a new semantics, based on the so-called \emph{Founded Autoepistemic Equilibrium Logic} (FAEEL), that satisfies foundedness.
This semantics actually covers the syntax of any arbitrary modal theory and is a combination of Equilibrium Logic and Autoepistemic Logic.
As a main result, we were able to prove that, for the syntax of epistemic specifications, FAEEL world views coincide with the set of G91 world views that are founded.
We showed how this semantics behaves on a set of common examples in the literature and proved that it satisfies other three basic properties: all world views are S5 models (\emph{supra-S5}); standard programs have (at most) a unique world view containing all the stable models (\emph{supra-ASP}); and subjective constraints just remove world views (monotonicity).
FAEEL also satisfies the property of~\emph{epistemic splitting} as proposed in~\cite{CabalarFF2018}, but we leave the proof and discussion for future work, together with a formal comparison with other approaches.

\bibliography{refs}

\newpage
\appendix

\section{Proof of Proposition~\ref{prop:persistance}}

\begin{Proofof}{\ref{prop:persistance}}
Just note that, for atomic $a$, we have that
$\tuple{\wv,H,T} \models a$
iff
$a \in H \subseteq T$
which implies that
$\tuple{\wv^t,T} \models a$.
The rest of the proof follows by induction in the structure of $\varphi$.
\end{Proofof}

\section{Proof of Theorem~\ref{thm:unfounded-free}}

The proof of Theorem~\ref{thm:unfounded-free} rely on the definition of \sfequilibrium models. We first show that \sfequilibrium models are unfounded-free and then that autoepistemic world views are \sfequilibrium models.
Then, we obtain that autoepistemic world views are unfounded-free as a corollary.
We start by defining \sfhtinterpretations~$\wv$ as sets of \htinterpretations.
An \sfhtinterpretations
$\wv$ is said to be \emph{total} iff
it satisfies that every \htinterpretation~$\tuple{H,T} \in \wv$ is total, that is, $H = T$.
We say that an  \sfhtinterpretations~$\wv$ is a \sfhtmodel of a formula~$\varphi$
iff $\tuple{\wv,H,T} \models \varphi$ for every $\tuple{H,T}\in \wv$.

\begin{definition}
Given \sfhtinterpretations
$\wv_1$
and
$\wv_2$,
we write $\wv_1 \preceq \wv_2$
iff the following two condition hold:
\begin{enumerate}
\item for every $\tuple{H_2,T} \in \wv_2$,
there is some $\tuple{H_1,T} \in \wv_1$,
with $H_1 \subseteq H_2$.

\item for every $\tuple{H_1,T} \in \wv_1$,
there is some $\tuple{H_2,T} \in \wv_2$,
with $H_1 \subseteq H_2$.
\end{enumerate}
As usual, we write $\wv_1 \prec \wv_2$ iff $\wv_1 \preceq \wv_2$ and $\wv_1 \neq \wv_2$.\qed
\end{definition}

\begin{definition}
A total \sfhtinterpretation $\wv$ is said to be a \emph{\sfequilibrium model} of some theory~$\Gamma$ iff $\wv$ is a \sfhtmodel of $\Gamma$ and there is no other \sfhtmodel~$\wv'$ of $\Gamma$ such that $\wv' \prec \wv$.\qed
\end{definition}

\sfequilibrium model are similar to epistemic equilibrium model in the sense of~\cite{CerroHS15}, being the major difference that, in our approach, different evaluations in the here world can be used to minimise the same evaluation in the there world.
As a result, we conjeture that every \sfequilibrium model is also an epistemic equilibrium model in the sense of~\cite{CerroHS15}.
On the other hand, the converse does not hold in general.
For instance, if we take the program~\program\ref{theory:larger.set.of.worlds} from Example~\ref{ex:unfounded}, we have that $\sset{\set{a},\set{b}}$ is an epistemic equilibrium model in the sense of~\cite{CerroHS15}, but not an \sfequilibrium model in our approach.
This shows that epistemic equilibrium model in the sense of~\cite{CerroHS15} are not unfounded-free (in the same way that world views or autoepistemic equilibrium model in this approach are not unfounded-free).
The following result shows that
\sfequilibrium model are, in fact,
unfounded-free.

\begin{theorem}\label{thm:unfounded-free.sf}
Any \sfequilibrium model of any program~$\Pi$ is \mbox{unfounded-free.\qed}
\end{theorem}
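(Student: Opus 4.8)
The plan is to prove the contrapositive: I will show that if a total \sfhtinterpretation $\wv$ is \emph{not} unfounded-free, then it cannot be a \sfequilibrium model. So assume $\wv=\setm{\tuple{T_i,T_i}}{i}$ is total and that its associated belief view $\wv^t$ is unfounded, witnessed by an unfounded set $\us$ (Definition~\ref{def:unfoundedset}) with $I\in\wv^t$ and $X\cap I\neq\emptyset$ for every $\tuple{X,I}\in\us$ (Definition~\ref{def:unfounded}). If $\wv$ is not even a \sfhtmodel of $\Pi$ there is nothing to prove, so I also assume $\wv\models\Pi$, and the goal becomes to exhibit a \sfhtmodel $\wv''$ of $\Pi$ with $\wv''\prec\wv$. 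The crucial construction uses \emph{several} here-worlds over the same there-world -- exactly the feature distinguishing \sfequilibrium models from the epistemic equilibrium models of~\cite{CerroHS15}. Writing $Y=\bigcup\setm{X'}{\tuple{X',I'}\in\us}$, I set
\[
\wv'' \ \eqdef\ \setm{\tuple{I\setminus X,\,I}}{\tuple{X,I}\in\us}\ \cup\ \setm{\tuple{T_i,T_i}}{T_i\not\in\setm{I'}{\tuple{X',I'}\in\us}},
\]
i.e.\ each belief set $I$ carrying unfounded material is replaced by one reduced here-world $I\setminus X$ per pair $\tuple{X,I}\in\us$, while untouched there-worlds are kept total. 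Checking $\wv''\preceq\wv$ is routine from the definition of $\preceq$: every there-world of $\wv$ has a representative in $\wv''$ and every here-part of $\wv''$ is contained in the matching total world of $\wv$; strictness $\wv''\neq\wv$ holds because some pair has $X\cap I\neq\emptyset$, so $I\setminus X\subsetneq I$ yields a here-world absent from $\wv$.

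The core of the argument is then $\wv''\models\Pi$, i.e.\ that each rule $r$ is satisfied at every $\tuple{H',T'}\in\wv''$. The there-level clause~(ii) of the implication is immediate: since $(\wv'')^t=\wv^t$ and $\wv$ is a total model (for which satisfaction collapses to the \kdvalid reading of Section~\ref{sec:g91}), clause~(ii) is just $\kdint{T'}{\wv^t}\models r$, which holds. For the here-level clause~(i), I assume $\tuple{\wv'',H',T'}\models\Body(r)$ and aim for $\Head(r)\cap H'\neq\emptyset$. The decisive observation is that body satisfaction in $\wv''$ feeds precisely conditions~\ref{item:1:def:unfounded}, \ref{item:2:def:unfounded} and~\ref{item:4:def:unfounded} of Definition~\ref{def:unfoundedset}: by Persistence (Proposition~\ref{prop:persistance}), $\tuple{\wv'',H',T'}\models\Body(r)$ gives $\kdint{T'}{\wv^t}\models\Body(r)$, which for $T'=I$ is condition~\ref{item:1:def:unfounded}; each positive objective atom $a\in\Bodyrp(r)$ must lie in $H'=I\setminus X$, hence $a\notin X$, giving condition~\ref{item:2:def:unfounded}; and each positive subjective literal $\bL a$ in the body forces $a$ into \emph{every} here-part of $\wv''$, in particular into every $I'\setminus X'$, so $a\notin Y$, giving condition~\ref{item:4:def:unfounded}.

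It remains to derive $\Head(r)\cap H'\neq\emptyset$ for the reduced worlds $\tuple{H',T'}=\tuple{I\setminus X,I}$ (the untouched total worlds need only Persistence and $\wv\models\Pi$). I split on the head. If $\Head(r)\cap X=\emptyset$, then from $\kdint{I}{\wv^t}\models\Body(r)$ and $\wv\models\Pi$ there is some $c\in\Head(r)\cap I$, and $c\notin X$ puts $c\in\Head(r)\cap(I\setminus X)$. If instead $\Head(r)\cap X\neq\emptyset$, then $r$ is a candidate justifying rule for $\tuple{X,I}\in\us$; since no justifying rule exists, $r$ must violate one of the four conditions, and as \ref{item:1:def:unfounded}, \ref{item:2:def:unfounded}, \ref{item:4:def:unfounded} already hold it must violate condition~\ref{item:3:def:unfounded}, yielding some $c\in(\Head(r)\setminus X)\cap I$, i.e.\ again $c\in\Head(r)\cap(I\setminus X)$. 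Either way the head holds at $\tuple{\wv'',I\setminus X,I}$, so $\wv''\models\Pi$ while $\wv''\prec\wv$, contradicting that $\wv$ is a \sfequilibrium model.

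I expect the main obstacle to be the faithful translation between the two satisfaction relations: the unfounded-set conditions are phrased in \kdvalid relative to the \emph{total} $\wv$, whereas $\wv''$ is a genuinely non-total \sfhtinterpretation whose subjective literals are re-evaluated over the new, smaller here-parts. Persistence disposes of the objective and negative fragments cleanly, but pinning down that a positive $\bL a$ holding in $\wv''$ is \emph{exactly} condition~\ref{item:4:def:unfounded} through the global set $Y$ -- and recognising that this is the clause that blocks self-supporting rules -- is the delicate point, together with the disjunctive head analysis that genuinely requires the multiple-here-world construction rather than a single reduced interpretation.
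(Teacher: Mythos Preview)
Your proof is correct and follows essentially the same route as the paper's: both construct the same smaller \sfhtinterpretation by replacing each there-world $I$ hit by $\us$ with the reduced pairs $\tuple{I\setminus X,I}$, then exploit conditions~\ref{item:1:def:unfounded}--\ref{item:4:def:unfounded} of Definition~\ref{def:unfounded} to show this smaller interpretation still models every rule. The only cosmetic difference is that the paper argues by contradiction (assume $\wv'$ fails some rule and trace it back to a violated unfounded-set condition) whereas you verify $\wv''\models\Pi$ directly via the head case-split on $\Head(r)\cap X$; the underlying computations are identical.
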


\begin{proof}
Let $\wv$ be some \sfequilibrium model of some program~$\Pi$
and suppose, for the sake of contradiction, that it is not unfounded-free.
Then, there is a unfounded-set~$\us$ for~$\Pi$ with respect to~$\wv$ such that, for every $\tuple{X,I} \in \us$,
we have  $I \in \wv$ and $X \cap I \neq \emptyset$.
Let
\begin{IEEEeqnarray*}{l ?C? C l C l}
\wv' &=&& \setm{ \tuple{I,I} &}{& \tuple{I,I} \in \wv \text{ and no } \tuple{X,I} \notin \us}
\\
	 &&\cup& \setm{ \tuple{I\setminus X,I} &}{& \tuple{I,I} \in\wv  \text{ and } \tuple{X,I} \in \us}
\end{IEEEeqnarray*}
Since $\us$ is non-empty and $X \cap I \neq\emptyset$ for all $\tuple{X,I} \in \us$, we have that
$\wv' \prec \wv$ and, since $\wv$ is an \sfequilibrium model of~$\Pi$, it must be that $\wv'$ is not an \sfhtmodel of $\Pi$.
Hence, there is some rule \mbox{$r \in \Pi$} such that
$\wv$ is an \sfhtmodel of $r$ while $\wv'$ is not.
Besides, the latter implies that there is some $\tuple{H,T} \in \wv'$
such that $\tuple{\wv',H,T}\not\models r$
and, thus, that one of following conditions must hold:
\begin{enumerate}
\item $\tuple{\wv',H,T} \models \bigwedge\Body(r)$ and $\tuple{\wv',H,T} \not\models \bigvee\Head(r)$, or
	\label{item:1:thm:unfounded-free.sf}
\item $\tuple{\wv,T,T} \models \bigwedge\Body(r)$ and $\tuple{\wv,T,T} \not\models \bigvee\Head(r)$.
\end{enumerate}
Note that the latter is a contradiction with the fact that
$\wv$ is an \sfhtmodel of $r$ and, thus, the former must hold.
Furthermore, $\tuple{\wv',H,T} \models \bigwedge\Body(r)$ implies $\tuple{\wv,T} \models \bigwedge\Body(r)$ and,
since $\wv$ is an \sfhtmodel of $r$,
that $\tuple{\wv,T} \models \bigvee\Head(r)$.
Hence,
$\Head(r) \cap H = \emptyset$
and
there is an atom $a \in \Head(r)$ such that 
$a \in T \setminus H$.
By construction, this implies that, since $\tuple{X,T} \in \us$ with $X = T \setminus H$
and, thus, one of the following conditions must hold:
\begin{enumerate}
\item $\kdint{T}{W} \not\models \bigwedge\Body(r)$, 
\item $\Bodyrp(r) \cap X \neq \emptyset$, or
\item $(\Head(r) \setminus X) \cap T \neq \emptyset$, or
\item $\Bodymp(r) \cap Y \neq \emptyset$.
\end{enumerate}
The first condition cannot hold because we have that $\tuple{\wv,T} \models \bigwedge\Body(r)$.
Furthermore, $X = T \setminus H$,
we have that
\\
$(\Head(r) \setminus X) \cap T \neq \emptyset$
holds
\\iff
$(\Head(r) \setminus (T \setminus H)) \cap T \neq \emptyset$
\\iff
$(\Head(r) \cap \overline{(T \cap \overline{H}))} \cap T \neq \emptyset$
\\iff
$(\Head(r) \cap (\overline{T} \cup H)) \cap T \neq \emptyset$
\\iff
$\Head(r) \cap \overline{T} \cap T \neq \emptyset$ or $\Head(r) \cap H \cap T \neq \emptyset$
\\iff
$\Head(r) \cap H \neq \emptyset$
\\
which does not hold.
Hence, the forth condition cannot hold either.

Assume now that $b \in \Bodyrp(r) \cap X \neq \emptyset$.
Then, $\tuple{\wv',H,T} \not\models b$ and, thus, we have that $\tuple{\wv',H,T}  \not\models \bigwedge\Body(r)$ which is a contradiction
with~\ref{item:1:thm:unfounded-free.sf}.
Therefore, it must be that
$\Bodymp(r) \cap Y \neq \emptyset$ holds.
Pick some atom \mbox{$b \in \Bodymp(r) \cap Y$}.
But then, there is some $\tuple{X',T'} \in \us$ such that $b \in X'$
and $\tuple{H',T'} \in \wv$ with $X' = T' \setminus H'$.
This implies that
$\tuple{\wv',H',T'} \not\models b$ and, thus, that $\tuple{\wv',H,T} \not\models \bL b$ which is a contradiction with the fact that $\tuple{\wv',H,T} \models \bigwedge\Body(r)$.
Consequently, $\wv$ must be unfounded-free.\qed
\end{proof}

\begin{observation}\label{obs:ht.d45models.are.s5models}
Let $\Gamma$ be a theory and~$\cI = \tuple{\wv,H,T}$ be a model of $\Gamma$.
Then, $\wv$ is an \sfhtmodel of~$\Gamma$.\qed
\end{observation}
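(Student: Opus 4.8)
The plan is to derive the statement directly from the two definitions, observing that being a belief model is a stronger requirement than being a \sfhtmodel. First I would unfold the definition of belief model: $\cI = \tuple{\wv,H,T}$ being a model of $\Gamma$ means, by definition, that $\tuple{\wv,H_i,T_i} \models \varphi$ holds for every pair $\tuple{H_i,T_i} \in \wv \cup \{\tuple{H,T}\}$ and every $\varphi \in \Gamma$. By contrast, the definition of \sfhtmodel only asks that $\tuple{\wv,H_i,T_i} \models \varphi$ hold for every $\tuple{H_i,T_i} \in \wv$ and every $\varphi \in \Gamma$.

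Before concluding, I would check that the satisfaction relation is literally the same in both places: the clauses defining $\tuple{\wv,H_i,T_i} \models \varphi$ are the combined KD45/HT conditions of Section~\ref{sec:autoepistemic.equilibrium.logic}, and in both definitions the belief view used to evaluate the modal operator $\bL$ is the very same set $\wv$. Thus no reinterpretation of the formula occurs when we pass from the belief-model condition to the \sfhtmodel condition.

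The one substantive step is then the trivial inclusion $\wv \subseteq \wv \cup \{\tuple{H,T}\}$. Restricting the universally quantified belief-model condition to those pairs that already belong to $\wv$ yields exactly the defining condition of $\wv$ being a \sfhtmodel of $\Gamma$, which completes the argument.

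I do not expect any genuine obstacle here: the content is purely definitional, with the real world $\tuple{H,T}$ merely contributing one extra pair to the quantification that is simply dropped when passing to $\wv$. The only point worth verifying is that the definition of \sfhtmodel imposes no side condition (such as totality of $\wv$) beyond satisfaction at each of its pairs — and it does not — so the restriction argument alone suffices.
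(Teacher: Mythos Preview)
Your argument is correct and is precisely the definitional unfolding the paper has in mind: the observation is stated with a \qed and no explicit proof, since it follows immediately from the inclusion $\wv \subseteq \wv \cup \{\tuple{H,T}\}$ together with the fact that both notions use the same satisfaction relation. There is nothing to add.
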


\begin{proposition}\label{prop:epistemic-eq.are.s5equilibrium}
Let $\Gamma$ be a theory and
$\wv$ be an \sfinterpretation such that $\tuple{\wv,T}$ is a \kdequilibrium model of $\Gamma$
for every $T \in \wv$.
Then, $\wv$ is an \sfequilibrium model of~$\Gamma$.
\end{proposition}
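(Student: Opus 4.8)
The plan is to check the two defining conditions of an \sfequilibrium model of $\Gamma$ separately, each time feeding on the hypothesis that $\tuple{\wv,T} \in \MEQ[\Gamma]$ for every $T \in \wv$. Recall that a total \sfhtinterpretation $\wv$ is an \sfequilibrium model of $\Gamma$ iff (a) $\wv$ is an \sfhtmodel of $\Gamma$, and (b) there is no \sfhtmodel $\wv' \prec \wv$ of $\Gamma$. Condition~(a) is immediate: the view is non-empty, so fixing any $T \in \wv$, the interpretation $\tuple{\wv,T}$ is in particular a belief model of $\Gamma$, and Observation~\ref{obs:ht.d45models.are.s5models} then gives that $\wv$ is an \sfhtmodel of $\Gamma$.

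For condition~(b) I would argue by contradiction, assuming an \sfhtmodel $\wv'$ of $\Gamma$ with $\wv' \prec \wv$ exists. Since $\wv$ is non-empty, the first clause of the ordering on \sfhtinterpretations makes $\wv'$ non-empty too, so I may pick a pair $\tuple{H',T} \in \wv'$; its second clause guarantees $T \in \wv$, whence $\tuple{\wv,T} \in \MEQ[\Gamma]$ by hypothesis. The central move is to ``point'' $\wv'$ at this pair, forming the belief interpretation $\cI' = \tuple{\wv',H',T}$. Because the chosen real world $\tuple{H',T}$ already lies in $\wv'$, the set $\wv' \cup \set{\tuple{H',T}}$ equals $\wv'$, so $\wv'$ being an \sfhtmodel of $\Gamma$ is exactly what is needed to conclude that $\cI'$ is a belief model of $\Gamma$; this is the restricted converse of Observation~\ref{obs:ht.d45models.are.s5models}, available precisely because the real world sits inside the view.

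It then remains to establish $\cI' \prec \tuple{\wv,T}$ in the sense of Definition~\ref{def:int.prec}, contradicting that $\tuple{\wv,T}$ is an equilibrium belief model. This is the step I expect to require the most care, since two different orderings are in play and must be aligned. Exploiting that $\wv$ is total, clause~1 of Definition~\ref{def:int.prec} holds trivially ($T'=T$, and $H' \subseteq T$ because $\tuple{H',T}$ is an HT-pair), while clauses~2 and~3 coincide \emph{verbatim} with the two clauses defining $\wv' \preceq \wv$ for \sfhtinterpretations; strictness carries over because $\wv' \neq \wv$ already forces $\cI' \neq \tuple{\wv,T}$. The conceptual point worth underlining is that the minimisation behind $\MEQ$ compares belief interpretations over \emph{arbitrary} belief views (only the ``there'' components being fixed), so $\cI'$, whose view $\wv'$ differs from $\wv$, is a genuine competitor; this is exactly the mechanism by which self-supported views are ruled out. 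Hence no such $\wv'$ exists, condition~(b) holds, and $\wv$ is an \sfequilibrium model of $\Gamma$.
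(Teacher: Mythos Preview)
Your proposal is correct and follows essentially the same route as the paper: establish that $\wv$ is an \sfhtmodel via Observation~\ref{obs:ht.d45models.are.s5models}, then lift any competing \sfhtmodel $\wv' \prec \wv$ to a belief model $\tuple{\wv',H',T}$ strictly below some $\tuple{\wv,T}$, contradicting the equilibrium hypothesis. The only cosmetic difference is that the paper explicitly selects a non-total pair $\tuple{H,T} \in \wv'$ with $H \subset T$ to witness strictness, whereas you pick an arbitrary pair and derive $\cI' \neq \tuple{\wv,T}$ from $\wv' \neq \wv$; both arguments are valid.
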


\begin{proof}
First note that, since $\tuple{\wv,T}$ is a \kdequilibrium model of $\Gamma$
for every $T \in \wv$,
it follows that $\tuple{\wv,T}$ is a model of $\Gamma$
for every $T \in \wv$
and, thus, that
$\tuple{\wv,T} \models \varphi$ for every formula~$\varphi \in \Gamma$.
In its turn, this implies that $\wv$ is an \sfmodel of $\Gamma$.
Suppose now, for the sake of contradiction, that $\wv$ is not an \sfequilibrium model of~$\Gamma$
and, thus, that there is some \sfhtmodel~$\wv'$ of $\Gamma$ such that $\wv' \prec \wv$.
Hence, there is $\tuple{H,T} \in \wv'$ such that $H \subset T$.
Furthermore, since  $\tuple{\wv,T}$ is a \kdequilibrium model of $\Gamma$
it follows that $\tuple{\wv',H,T}$ is not a model of $\Gamma$.
Hence, there is a formula~$\varphi \in \Gamma$ and $\tuple{H',T'} \in \wv' \cup \set{\tuple{H,T}} = \wv'$
such that
$\tuple{\wv',H',T'} \not\models \varphi$.
This implies that $\wv'$ is not an \sfhtmodel of $\varphi$
which is a contradiction with the fact that
$\wv'$ is an \sfhtmodel of $\Gamma$.
Consequently, $\wv$ must be an \sfequilibrium model of~$\Gamma$.\qed
\end{proof}

Note that, in general, the converse of Proposition~\ref{prop:epistemic-eq.are.s5equilibrium} does not hold.
For instance, $\sset{\set{a}}$ is the unique \sfequilibrium model of the the theory $\set{\bL a}$
while $\kdint{\set{a}}{\sset{\set{a}}}$ is not a \kdequilibrium model of it.

\begin{corollary}\label{cor:autoepistemic.are.s5equilibrium}
Let $\wv$ be an autoepistemic world view of some theory~$\Gamma$.
Then, it satisfies the following two conditions:
\begin{enumerate}
\item $\wv$ is an \sfequilibrium model, and
\item there is no propositional interpretation $T$ such that model $\tuple{\wv,T}$ is an \kdequilibrium model of $\Gamma$ and $T \notin \wv$.\qed
\end{enumerate}
\end{corollary}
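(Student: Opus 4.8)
The plan is to read both claims off directly from the fixpoint equation that defines an autoepistemic world view, using Proposition~\ref{prop:epistemic-eq.are.s5equilibrium} as the single substantial ingredient. By Definition~\ref{def:au-eqmodel}, saying that $\wv$ is an autoepistemic (equilibrium) world view of $\Gamma$ means precisely that $\wv = \setm{T}{\kdint{T}{\wv} \in \MEQ[\Gamma]}$, where membership in $\MEQ[\Gamma]$ is exactly what ``being an \kdequilibrium model of $\Gamma$'' names. The whole argument amounts to observing that this one set-identity packages two independent implications, and that each implication lands on exactly one of the two conditions of the corollary. So first I would split the equality $\wv = \setm{T}{\kdint{T}{\wv} \in \MEQ[\Gamma]}$ into its two containments: the \emph{soundness} half, $T \in \wv$ implies $\kdint{T}{\wv} \in \MEQ[\Gamma]$; and the \emph{maximality} half, $\kdint{T}{\wv} \in \MEQ[\Gamma]$ implies $T \in \wv$, equivalently $T \notin \wv$ implies $\kdint{T}{\wv} \notin \MEQ[\Gamma]$.

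Condition~2 is then immediate: it asserts that there is no propositional interpretation $T$ with $\kdint{T}{\wv}$ an \kdequilibrium model of $\Gamma$ and $T \notin \wv$, which is verbatim the contrapositive form of the maximality half just extracted. No further work is needed for this part, since the definition of equilibrium world view already builds this maximality into the fixpoint equation.

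For condition~1 I would invoke Proposition~\ref{prop:epistemic-eq.are.s5equilibrium}. Its hypothesis is that $\wv$ is an \sfinterpretation such that $\kdint{T}{\wv}$ is an \kdequilibrium model of $\Gamma$ for every $T \in \wv$, and this is exactly the soundness half obtained above (an autoepistemic world view is, by construction, a set of total/propositional interpretations, hence an \sfinterpretation). The conclusion of the proposition is precisely that $\wv$ is an \sfequilibrium model of $\Gamma$, which is condition~1. A small point I would verify in passing is that $\wv$ is nonempty and total: since an autoepistemic world view must be a belief view, and belief views are nonempty, $\wv$ is a legitimate total \sfinterpretation, so the proposition applies without qualification.

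The statement therefore follows with essentially no computation once Proposition~\ref{prop:epistemic-eq.are.s5equilibrium} is available; the genuine effort has already been spent there (and, for the second condition, is inherent in the maximality clause of the defining equality). Consequently, the only thing that demands care is terminological bookkeeping rather than any real obstacle: confirming that ``autoepistemic world view'' is the equilibrium world view of Definition~\ref{def:au-eqmodel}, that ``\kdequilibrium model'' denotes membership in $\MEQ[\Gamma]$, and that the two containments of $\wv = \setm{T}{\kdint{T}{\wv} \in \MEQ[\Gamma]}$ are matched to the correct conditions. Provided these identifications are made cleanly, the corollary is a direct consequence of the definition together with Proposition~\ref{prop:epistemic-eq.are.s5equilibrium}.
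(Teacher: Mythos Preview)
Your proposal is correct and matches the paper's intended argument: the corollary is stated without proof immediately after Proposition~\ref{prop:epistemic-eq.are.s5equilibrium}, and your derivation---splitting the fixpoint equality of Definition~\ref{def:au-eqmodel} into its two containments, reading condition~2 off the maximality containment, and obtaining condition~1 from the soundness containment via Proposition~\ref{prop:epistemic-eq.are.s5equilibrium}---is exactly the reasoning the paper leaves implicit.
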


Note that, in general, the converse of Corollary~\ref{cor:autoepistemic.are.s5equilibrium} does not hold.
For instance, theory $\set{\bL a,\neg\neg a}$
has no autoepistemic world view
while we have that
$\sset{\set{a}}$ is an \sfequilibrium model
and
$\kdint{\emptyset}{\sset{\set{a}}}$ is not a \kdequilibrium model of it.
To see that
$\sset{\set{a}}$ is not autoepistemic world view,
note that
$\kdint{\set{a}}{\sset{\set{a}}}$ is not a \kdequilibrium model of $\set{\bL a,\neg\neg a}$.

\begin{Proofof}{\ref{thm:unfounded-free}}
From Corollary~\ref{cor:autoepistemic.are.s5equilibrium},
we have that $\wv$ is an \sfequilibrium model and, from Theorem~\ref{thm:unfounded-free.sf}, this implies that $\wv$ is unfounded-free.
\end{Proofof}

\section{Proof of Theorem~\ref{thm:g91}}

The proof of Theorem~\ref{thm:g91} rely on the definition of weak autoepistemic world views.
We first show that every autoepistemic world view is also a weak autoepistemic world view
and then that
weak autoepistemic world views
coincide with
G91-world views.
Then, we obtain that autoepistemic world views are G91-world views as a corollary.
Let us start by defining semi-total interpretations.
We say that an interpretation $\tuple{\wv,H,T}$ is \emph{semi-total} iff 
every \htinterpretation~$\tuple{H',T'} \in \wv$ is total, that is, $H' = T'$.
It is easy to see that, every total interpretation is semi-total but not vice-versa.

\begin{definition}
A total interpretation $\cI$ is said to be a \emph{weak \kdequilibrium model} of some theory~$\Gamma$ iff $\cI$ is a model of $\Gamma$ and there is no other semi-total model $\cI'$ of $\Gamma$ such that $\cI' \prec \cI$.\qed
\end{definition}

Note that every \kdequilibrium model is also a weak \kdequilibrium model, but not vice-versa.
For instance, $\kdint{\set{a}}{\sset{\set{a}}}$ is a weak \kdequilibrium model of $\set{a \leftarrow \bL a}$
but not a \kdequilibrium model.

\begin{definition}\label{def:weal-au-eqmodel}
A \sfinterpretation~$\wv$ is called a \emph{weak autoepistemic world view} of~$\Gamma$
iff it satisfies the following two conditions:
\begin{enumerate}
\item $\kdint{T}{\wv}$ is a weak \kdequilibrium model of $\Gamma$ for every 
\mbox{$T \in \wv$}, and
\item there not exists any propositional interpretation $T$ such that $\tuple{\wv,T}$ is a weak \kdequilibrium model of $\Gamma$ and $T \notin \wv$.\qed
\end{enumerate}
\end{definition}

\begin{proposition}\label{prop:a.world.view->weak}
Every autoepistemic world view is also a weak autoepistemic world view.\qed
\end{proposition}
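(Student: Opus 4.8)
The plan is to check directly the two clauses of Definition~\ref{def:weal-au-eqmodel} for an arbitrary autoepistemic world view~$\wv$ of~$\Gamma$; by Definition~\ref{def:au-eqmodel} this means $\wv=\{T \mid \kdint{T}{\wv}\in\MEQ[\Gamma]\}$, where the members of $\MEQ[\Gamma]$ are the equilibrium belief models of~$\Gamma$. The first clause is essentially immediate: for each $T\in\wv$ the (total) interpretation $\kdint{T}{\wv}$ belongs to $\MEQ[\Gamma]$, so it is a belief model of~$\Gamma$ with no belief model strictly $\prec$-below it; since every semi-total model is in particular a belief model, it has no semi-total model strictly below it either, hence it is a weak \kdequilibrium model of~$\Gamma$.

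For the second clause I would argue by contradiction. Assume some $T\notin\wv$ makes $\kdint{T}{\wv}$ a weak \kdequilibrium model of~$\Gamma$. Then $\kdint{T}{\wv}$ is a total belief model of~$\Gamma$, but since $T\notin\wv$ it is not in $\MEQ[\Gamma]$, so there is a belief model $\cI'=\kdint{H',T}{\wv'}$ with $\cI'\prec\kdint{T}{\wv}$. The first thing to record is a structural fact: applying clauses~2 and~3 of the order $\preceq$ of Definition~\ref{def:int.prec} to $\cI'\preceq\kdint{T}{\wv}$, and using that $\wv$ is total, one gets that the set of ``there''-components of $\wv'$ is exactly $\wv$, i.e. $(\wv')^t=\wv$.

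The core of the proof is then to show that $\cI'$ must actually be semi-total, which already yields a contradiction. Suppose not: pick a non-total pair $\tuple{H_j,T_j}\in\wv'$, so $H_j\subsetneq T_j$. Since $\tuple{H_j,T_j}\in\wv'$ and $\cI'$ is a belief model, every point of $\wv'$ satisfies~$\Gamma$, hence $\kdint{H_j,T_j}{\wv'}$ is itself a belief model of~$\Gamma$. Now $T_j\in(\wv')^t=\wv$, so $\kdint{T_j}{\wv}\in\MEQ[\Gamma]$; yet $\kdint{H_j,T_j}{\wv'}\prec\kdint{T_j}{\wv}$, because clause~1 of $\preceq$ holds ($H_j\subseteq T_j$, same ``there''-component $T_j$), clauses~2 and~3 hold ($\wv'$ and $\wv$ share the same ``there''-components, and ``here'' is always contained in ``there''), and strictness comes from $H_j\subsetneq T_j$. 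This contradicts $\kdint{T_j}{\wv}$ being an equilibrium belief model. Therefore $\cI'$ is semi-total; but then $\cI'\prec\kdint{T}{\wv}$ is a semi-total belief model of~$\Gamma$, contradicting that $\kdint{T}{\wv}$ is a weak \kdequilibrium model. Hence no such $T$ exists, the second clause holds, and $\wv$ is a weak autoepistemic world view.

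I expect the main obstacle to be exactly this last step: noticing that an internal non-total component of a belief model sitting strictly below $\kdint{T}{\wv}$ is, by itself, a belief model strictly below the total interpretation $\kdint{T_j}{\wv}$, and that $\kdint{T_j}{\wv}$ is guaranteed to be an equilibrium belief model precisely because $\wv$ --- being a full autoepistemic world view --- collects all interpretations whose belief-view-$\wv$ reading lies in $\MEQ[\Gamma]$. Everything else, in particular the identity $(\wv')^t=\wv$ and the verification of the $\preceq$-clauses, is routine bookkeeping from Definition~\ref{def:int.prec} and needs no use of strictness in computing $(\wv')^t$.
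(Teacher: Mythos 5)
Your proof is correct and follows essentially the same route as the paper's: the first clause is immediate because semi-total models form a subclass of all belief models, and the second clause is established by contradiction, observing that a non-total internal pair $\tuple{H_j,T_j}$ of the witnessing smaller model $\cI'$ yields a belief model $\kdint{H_j,T_j}{\wv'}$ strictly $\prec$-below the equilibrium belief model $\kdint{T_j}{\wv}$ (using $(\wv')^t=\wv$), which is impossible. The only cosmetic difference is that the paper first concludes $\cI'$ must be non-semi-total and then derives the contradiction from that pair, whereas you show $\cI'$ must be semi-total and contradict weak equilibrium; the key step is identical.
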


\begin{proof}
Since every \kdequilibrium model is also a weak \kdequilibrium model,
it only remains to be shown that if $\wv$ is an autoepistemic world view then
\begin{enumerate}
\item[]there not exists any propositional interpretation $T$ such that $\tuple{\wv,T}$ is a weak \kdequilibrium model of $\Gamma$ and $T \notin \wv$.
\end{enumerate}
Suppose, for the sake of contradiction, that 
there is some propositional interpretation $T$ such that $\tuple{\wv,T}$ is a weak \kdequilibrium model of $\Gamma$ and $T \notin \wv$.
Since $\wv$ is an autoepistemic world view, this implies that
$\tuple{\wv,T}$ is not a \kdequilibrium model of $\Gamma$ and, thus, that there is some non-semi-total model $\cI'=\tuple{\wv',H,T}$ of $\Gamma$ such that $\cI' \prec \cI$.
Hence, there is some $\tuple{H',T'} \in \wv'$ such that $H' \subset T'$.
Let $\cI'' = \tuple{\wv',H',T'}$.
Then, we have that $\cI'' \prec \tuple{\wv,T'}$ and, since $\wv$ is an autoepistemic world view of $\Gamma$ and $T \in \wv$, we have that $\tuple{\wv,T'}$ is a \kdequilibrium model of $\Gamma$.
These two facts together imply that $\cI''$ is not a model of $\Gamma$.
Hence, there is a formula~$\varphi\in\Gamma$ such that $\cI''$ is not a model of $\varphi$ and, thus,
there is $\tuple{H'',T''} \in \wv' \cup \set{\tuple{H',T'}}$ such that
$\tuple{\wv',H'',T''} \not\models \varphi$.
On the other hand, since
$\cI'$ is a model of $\Gamma$ ,
it follows that
$\tuple{\wv',H''',T'''} \not\models \varphi$
for every $\tuple{H''',T'''} \in \wv' \cup \set{\tuple{H,T}}$.
Hence, it follows that $H''=H'$ and $T''=T'$ and that
$\tuple{\wv',H',T'} \not\models \varphi$.
However, since $\tuple{H',T'} \in \wv'$,
this implies that
$\cI' = \tuple{\wv',H,T} \not\models \varphi$, which is a contradiction with the fact that $\cI'$ is a model of $\Gamma$.
Consequently,
$\wv$ is a weak autoepistemic world view.\qed
\end{proof}

\begin{lemma}\label{lem:semi-tota.reduct}
Let $\Gamma$ be a formula and $\cI = \tuple{\wv,H,T}$ be a semi-total interpretation.
Then, $\cI$ is a model of $\varphi$ iff $\cI$ is a model of $\varphi^\wv$.\qed
\end{lemma}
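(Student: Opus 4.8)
The plan is to prove the biconditional by structural induction on $\varphi$, keeping the semi-total belief view $\wv$ fixed throughout, and to prove the slightly more general claim that the equivalence $\cI \models \varphi \iff \cI \models \varphi^\wv$ holds for \emph{every} semi-total interpretation $\cI=\tuple{\wv,H,T}$ over the fixed $\wv$ (i.e.\ for every HT pair $\tuple{H,T}$ as the real world). This mild strengthening is what lets the induction close, as explained below. First I would record how the reduct acts on compound formulas: since none of $\bot$, an atom, a conjunction, a disjunction or an implication is itself of the form $\bL\chi$, the replacement in the reduct never fires at the top, so $\bot^\wv=\bot$, $a^\wv=a$, $(\psi_1\wedge\psi_2)^\wv=\psi_1^\wv\wedge\psi_2^\wv$, $(\psi_1\vee\psi_2)^\wv=\psi_1^\wv\vee\psi_2^\wv$ and $(\psi_1\to\psi_2)^\wv=\psi_1^\wv\to\psi_2^\wv$; only the case $\varphi=\bL\psi$ triggers replacement by $\top$ or $\bot$. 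The base cases and the two Boolean cases then follow immediately from the induction hypothesis, because FAEEL satisfaction of $\wedge$ and $\vee$ is pointwise.

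The implication case is where care is needed, and I expect it to be the main technical obstacle. FAEEL satisfaction of $\psi_1\to\psi_2$ at $\cI=\tuple{\wv,H,T}$ is a conjunction of a ``here'' condition phrased at $\cI$ and a ``there'' condition phrased at $\kdint{T}{\wv^t}$. Because $\wv$ is already total we have $\wv^t=\wv$, so $\kdint{T}{\wv^t}=\tuple{\wv,T,T}$ is itself total, hence semi-total, and shares the fixed $\wv$; it therefore falls under the strengthened inductive claim. The induction hypothesis applied at $\tuple{\wv,T,T}$ gives $\kdint{T}{\wv^t}\models\psi_i \iff \kdint{T}{\wv^t}\models\psi_i^\wv$ for $i\in\{1,2\}$, while the hypothesis at $\cI$ gives the corresponding equivalence for the here condition. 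Combining the two reproduces exactly the two conjuncts defining $\cI\models\psi_1^\wv\to\psi_2^\wv$, which settles the case.

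The modal case $\varphi=\bL\psi$ is the conceptual heart, and it is precisely where semi-totality does its work. The key observation is that $\cI\models\bL\psi$ does not depend on the real world $\tuple{H,T}$ at all: by definition it holds iff $\kdint{H_i,T_i}{\wv}\models\psi$ for all $\tuple{H_i,T_i}\in\wv$, a condition on $\wv$ alone. Since $\wv$ is total, each such $\tuple{H_i,T_i}$ is $\tuple{T_i,T_i}$ and $\kdint{T_i,T_i}{\wv}$ is a total belief interpretation, so by the collapse of FAEEL satisfaction to KD45/S5 on total interpretations this condition is exactly $\wv\models\bL\psi$ in the sense of Section~\ref{sec:g91}. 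But $\wv\models\bL\psi$ is also the very test governing the reduct: if it holds then $(\bL\psi)^\wv=\top$ and both $\cI\models\bL\psi$ and $\cI\models\top$ are true, and if it fails then $(\bL\psi)^\wv=\bot$ and both are false. Hence $\cI\models\bL\psi \iff \cI\models(\bL\psi)^\wv$, completing the induction. Beyond the bookkeeping in the implication step, the only real subtlety is to notice that semi-totality is what simultaneously forces the modal evaluation down to the S5 test $\wv\models\bL\psi$ and makes that test well-defined, so that both sides of the equivalence are driven by one and the same condition.
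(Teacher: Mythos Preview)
Your proposal is correct and follows essentially the same route as the paper: a structural induction establishing $\cI\models\varphi\iff\cI\models\varphi^{\wv}$ for every semi-total $\cI$ over the fixed $\wv$, with the modal case as the crux (semi-totality collapses the FAEEL evaluation of $\bL\psi$ to the S5 condition $\wv\models\bL\psi$ that governs the reduct), and you are more explicit than the paper about the implication case and the use of $\wv^t=\wv$. One small point you leave implicit: in the paper's terminology ``$\cI$ is a model of $\varphi$'' means \emph{belief model}, i.e.\ satisfaction at every world in $\wv\cup\{\tuple{H,T}\}$, not merely $\cI\models\varphi$; the paper closes with this passage as a separate final sentence, and while your generalized satisfaction claim immediately yields it (each $\tuple{\wv,T_i,T_i}$ is itself semi-total over $\wv$), you should state that step.
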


\begin{proof}
Assume that $\varphi = \bL \psi$.
Then, we have that
$\tuple{\wv,H',T'} \models \varphi$ if and only if
$\tuple{\wv,T'',T''} \models \psi$ for every $\tuple{T'',T''} \in \wv$
iff
$\wv$ is a \sfmodel of $\varphi$
iff $\varphi^\wv = \top$
iff $\cI \models \varphi^\wv$.
Then, by induction in the structure of~$\varphi$,
we get that
$\cI \models \varphi$
iff
$\cI \models \varphi^\wv$.
Finally, we have that $\cI$ is a model of $\varphi$
iff $\cI \models \varphi$ and $\tuple{\wv,T',T'} \models \varphi$ for every $\tuple{T',T'} \in \wv$
iff $\cI \models \varphi^\wv$ and $\tuple{\wv,T',T'} \models \varphi^\wv$ for every $\tuple{T',T'} \in \wv$
iff  $\cI$ is a model of $\varphi^\wv$.
\end{proof}

\begin{lemma}\label{lem:ht.correspondence}
Let $\Gamma$ be a propositional theory and $\cI = \tuple{\wv,H,T}$ be some interpretation.
Then, $\cI$ is a model of $\Gamma$ iff $\tuple{H',T'}$ is a
\htmodel of $\Gamma$ 
for every \htinterpretation $\tuple{H',T'} \in \wv \cup \set{ \tuple{H,T} }$.\qed
\end{lemma}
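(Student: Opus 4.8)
The plan is to reduce the whole statement to a single termwise fact: for a \emph{propositional} formula $\varphi$, any \htinterpretation $\tuple{H',T'}$, and any belief view $\wv$,
\[
\tuple{\wv,H',T'} \models \varphi \quad\text{iff}\quad \tuple{H',T'} \models \varphi \ \text{(in HT)}.
\]
Granting this, the lemma is immediate from the definition of belief model: $\cI = \tuple{\wv,H,T}$ is a belief model of $\Gamma$ iff $\tuple{\wv,H',T'} \models \varphi$ for every $\varphi \in \Gamma$ and every $\tuple{H',T'} \in \wv \cup \set{\tuple{H,T}}$, and by the displayed equivalence this is exactly the condition that each such $\tuple{H',T'}$ is an \htmodel of $\Gamma$. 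Because the equivalence is an ``iff'' proved pairwise, both directions of the lemma follow at once, with no need to argue them separately.

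First I would prove the displayed equivalence by structural induction on $\varphi$. The cases $\bot$, atoms, $\wedge$ and $\vee$ are immediate, since the corresponding FAEEL clauses read off $H'$ and recurse exactly as the HT clauses do, never consulting $\wv$. The only case that touches $\wv$ is implication $\varphi = \psi_1 \to \psi_2$, whose FAEEL clause demands two things: (i) $\tuple{\wv,H',T'} \not\models \psi_1$ or $\tuple{\wv,H',T'} \models \psi_2$, and (ii) $\tuple{\wv^t,T'} \not\models \psi_1$ or $\tuple{\wv^t,T'} \models \psi_2$. Condition (i) matches the HT ``here'' condition directly by the induction hypothesis applied to $\psi_1$ and $\psi_2$, so the work lies entirely in the ``there'' condition (ii).

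The key observation, and the main obstacle, is that $\tuple{\wv^t,T'}$ is a \emph{total} belief interpretation, so its satisfaction relation collapses to KD45; and on an \emph{objective} formula KD45 coincides with classical propositional satisfaction, because the only clause that could consult $\wv^t$ (the one for $\bL$) never fires when $\varphi$ is propositional. I would make this precise with a short auxiliary induction establishing that $\tuple{\wv^t,T'} \models \psi$ iff $T' \models \psi$ classically, for every objective $\psi$. With this in hand, condition (ii) rewrites as ``$T' \not\models \psi_1$ or $T' \models \psi_2$'' classically, i.e.\ $T' \models \psi_1 \to \psi_2$, which is precisely the HT ``there'' requirement. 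Combining (i) and (ii) yields $\tuple{\wv,H',T'} \models \psi_1 \to \psi_2$ iff $\tuple{H',T'} \models \psi_1 \to \psi_2$ in HT, closing the induction. Once the equivalence holds at the level of individual pairs, the lemma is a direct rewriting of the definition of belief model, so no further difficulty remains.
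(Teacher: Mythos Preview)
Your proposal is correct and follows essentially the same route as the paper: reduce to the pointwise equivalence $\tuple{\wv,H',T'} \models \varphi$ iff $\tuple{H',T'} \models \varphi$ for propositional $\varphi$, then unfold the definition of belief model. The paper simply asserts that equivalence in one line (``since $\varphi$ is a propositional formula'') without spelling out the structural induction on the implication case, so your version is more detailed but not different in substance.
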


\begin{proof}
By definition, we have that
$\cI$ is a model of $\Gamma$
iff
$\cI$ is a model of $\varphi$
for all $\varphi \in \Gamma$.
Furthermore,
$\cI$ is a model of $\varphi$
iff $\cI \models \varphi$
and
$\tuple{\wv,H',T'} \models \varphi$ for every $\tuple{H',T'} \in \wv$
iff
$\tuple{\wv,H',T'} \models \varphi$ for every \htinterpretation $\tuple{H',T'} \in \wv \cup \set{ \tuple{H,T} }$.
Finally, since $\varphi$ is a propositional formula, we have that
$\tuple{\wv,H',T'} \models \varphi$
iff
$\tuple{H',T'} \models \varphi$.
Hence,
$\cI$ is a model of $\Gamma$ iff $\tuple{H',T'}$ is a
\htmodel of $\Gamma$ 
for every \htinterpretation $\tuple{H',T'} \in \wv \cup \set{ \tuple{H,T} }$.\qed
\end{proof}

\begin{lemma}\label{lem:ht.correspondence}
Let $\Gamma$ be a propositional theory and $\cI = \tuple{\wv,T}$ be some total interpretation.
Then, $\cI$ is a \kdequilibrium model of $\Gamma$ iff $T'$ is an equilibrium model of~$\Gamma$ for every $T' \in \wv \cup \set{T}$.\qed
\end{lemma}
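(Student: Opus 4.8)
The plan is to reduce the statement to the preceding correspondence lemma for propositional theories together with the componentwise shape of the order $\preceq$ from Definition~\ref{def:int.prec}. Since $\Gamma$ is propositional, that lemma says a belief interpretation $\tuple{\wv',H',T'}$ is a belief model of $\Gamma$ precisely when every HT-interpretation among $\wv'\cup\set{\tuple{H',T'}}$ is an HT-model of $\Gamma$; for a total interpretation this just amounts to every $T'\in\wv\cup\set{T}$ being a classical model of $\Gamma$. Hence the only real content lies in the $\preceq$-minimality requirement in the definition of equilibrium belief model, and the proof splits into the two implications.

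For the ``if'' direction I would assume that every $T'\in\wv\cup\set{T}$ is an equilibrium model. Then each is in particular a classical model, so by the correspondence lemma $\cI=\tuple{\wv,T}$ is a belief model. Suppose towards a contradiction that some belief model $\cI'=\tuple{\wv',H',T}$ satisfies $\cI'\prec\cI$. Condition~1 of $\preceq$ gives $H'\subseteq T$; if $H'\subsetneq T$ then $\tuple{H',T}$ would be an HT-model of $\Gamma$ strictly below the total model $T$, contradicting that $T$ is an equilibrium model, so $H'=T$. Likewise, for each $\tuple{H'_i,T_i}\in\wv'$, condition~3 of $\preceq$ yields $\tuple{T_i,T_i}\in\wv$ (using that $\wv$ is total), so $T_i$ is an equilibrium model, and since $\tuple{H'_i,T_i}$ must also be an HT-model of $\Gamma$ we obtain $H'_i=T_i$; thus $\wv'$ is total and its set of interpretations is contained in that of $\wv$. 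Condition~2 of $\preceq$ then forces the reverse inclusion, so $\wv'=\wv$ and $\cI'=\cI$, contradicting $\cI'\prec\cI$. Hence $\cI$ is $\preceq$-minimal and belongs to $\MEQ[\Gamma]$.

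For the ``only if'' direction I would take $\cI=\tuple{\wv,T}\in\MEQ[\Gamma]$; being a belief model, every $T'\in\wv\cup\set{T}$ is a classical model of $\Gamma$. If some $T'$ were not an equilibrium model, pick $H\subsetneq T'$ with $\tuple{H,T'}\models\Gamma$ and exhibit a strictly smaller belief model: if $T'=T$ keep $\wv$ and replace the real world of $\cI$ by $\tuple{H,T}$; if $T'\in\wv$ keep the real world and replace the component $\tuple{T',T'}$ of $\wv$ by $\tuple{H,T'}$. In either case exactly one here-part strictly shrinks while all there-parts are unchanged, so conditions~1--3 of $\preceq$ hold and the new interpretation $\cI'$ satisfies $\cI'\prec\cI$; and by the correspondence lemma it is still a belief model, since each of its HT-components is either an unchanged total classical model or the chosen HT-model $\tuple{H,T'}$. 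This contradicts the minimality of $\cI$, so every $T'\in\wv\cup\set{T}$ is an equilibrium model.

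I do not expect a serious obstacle: the statement is essentially bookkeeping on top of the correspondence lemma and the fact that $\preceq$ acts componentwise on the there-parts. The one point needing care is the ``if'' direction, where deriving $\wv'=\wv$ requires using both directions of the order (conditions~2 and~3) after stability of each $T'$ has been used to force every here-part of a hypothetical smaller belief model to coincide with its there-part.
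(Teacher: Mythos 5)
Your proof is correct and takes essentially the same route as the paper's: both reduce the belief-model condition to the componentwise HT correspondence for propositional theories and then handle $\preceq$-minimality by relating a strictly smaller belief model to a strictly smaller HT-model of some $T' \in \wv \cup \set{T}$ (and conversely). The only cosmetic differences are that the paper witnesses non-minimality by \emph{adding} the pair $\tuple{H,T'}$ to $\wv$ rather than replacing $\tuple{T',T'}$, and argues one direction contrapositively where you show directly that any $\preceq$-smaller belief model must coincide with $\cI$; both variants are sound.
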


\begin{proof}
First note that, since $\Gamma$ is propositional,
$\cI$ is a model of $\Gamma$
iff
$T'$ is an model of~$\Gamma$ for every $T' \in \wv \cup \set{T}$.
Hence, we have that $\cI$ is a \kdequilibrium model of $\Gamma$ iff
there is not model $\cI'=\tuple{\wv',H,T}$ of $\Gamma$ such that $\cI' \prec \cI$.

Suppose, for the sake of contradiction, that there is some $T' \in \wv \cup \set{T}$
which is not an equilibrium model of $\Gamma$.
Then, there is $\tuple{H',T'} \models \Gamma$ such that $H' \subset T'$.
Let $\cI'=\tuple{\wv,H',T'}$ if $T' = T$ and $\cI'=\tuple{\wv',T',T'}$ with $\wv' = \set{\tuple{H',T'}}\cup \wv$ otherwise.
Then, $\cI' \prec \cI$ and, from Lemma~\ref{lem:ht.correspondence}, we have that $\cI'$ is a model of $\Gamma$ which is a contradiction.
Hence, $T'$ must be an equilibrium model of~$\Gamma$ for every $T' \in \wv \cup \set{T}$.
The other way around.
Assume that
$T'$ is an equilibrium model of~$\Gamma$ for every $T' \in \wv \cup \set{T}$
and suppose, for the sake of contradiction, that $\cI$ is not a \kdequilibrium model of $\Gamma$.
Then, there is some model  $\cI'=\tuple{\wv',H,T}$ of $\Gamma$ such that $\cI' \prec \cI$.
Then, from Lemma~\ref{lem:ht.correspondence},
we have that $\tuple{H',T'} \models\Gamma$ for every
$\tuple{H',T'} \in \wv \cup \set{ \tuple{H,T} }$.
This implies that $T'$ is not an equilibrium model of~$\Gamma$
which is a contradiction.
Consequently,
$\cI$ must be a \kdequilibrium model of~$\Gamma$.\qed
\end{proof}

\begin{theorem}\label{thm:g91.weak}
Given a theory~$\Gamma$ and some \sfinterpretation~$\wv$,
we have that $\wv$ is a weak autoepistemic world view of~$\Gamma$ iff $\wv$ is a G91-world view of $\Gamma$.\qed
\end{theorem}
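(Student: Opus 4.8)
The plan is to route the argument through the intermediate notion of \emph{weak \kdequilibrium model} and reduce everything to the propositional reduct~$\Gamma^\wv$. The crucial technical step is the following characterization, to be established for a total \sfinterpretation~$\wv$ and a propositional interpretation~$T$: the interpretation $\kdint{T}{\wv}$ is a weak \kdequilibrium model of~$\Gamma$ if and only if $T \in \SM[\Gamma^\wv]$ and $\wv \subseteq \CL[\Gamma^\wv]$. Everything else is bookkeeping around this fact.

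To prove the characterization I would first note that $\kdint{T}{\wv}$ and every interpretation strictly $\preceq$-below it are semi-total, so Lemma~\ref{lem:semi-tota.reduct} lets me replace $\Gamma$ by its subjective reduct $\Gamma^\wv$ everywhere; since $\Gamma^\wv$ is a propositional theory, the correspondence lemma (Lemma~\ref{lem:ht.correspondence}) turns ``being a model of $\Gamma^\wv$'' into ``every here-there pair that occurs is an HT-model of $\Gamma^\wv$''. Because $\wv$ is total, being a model of $\Gamma$ then amounts exactly to $T \models \Gamma^\wv$ together with $\wv \subseteq \CL[\Gamma^\wv]$. Next I would pin down which semi-total interpretations $\cI'$ satisfy $\cI' \prec \kdint{T}{\wv}$: conditions~(2)–(3) of $\preceq$ in Definition~\ref{def:int.prec}, combined with totality of $\wv$ and semi-totality of $\cI'$, force the ``there'' component of $\cI'$ to be exactly $\wv$, so $\cI'$ must have the form $\kdint{H,T}{\wv}$ with $H \subsetneq T$. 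Hence the absence of a smaller semi-total model of $\Gamma$ is — again via Lemma~\ref{lem:semi-tota.reduct} — the absence of $H \subsetneq T$ with $\tuple{H,T} \models \Gamma^\wv$, which together with $T \models \Gamma^\wv$ is precisely $T \in \SM[\Gamma^\wv]$. This gives the characterization.

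With the characterization in hand, both directions are short. For $(\Leftarrow)$, assume $\wv = \SM[\Gamma^\wv]$; then $\wv \subseteq \CL[\Gamma^\wv]$, so for every $T \in \wv = \SM[\Gamma^\wv]$ the characterization makes $\kdint{T}{\wv}$ a weak \kdequilibrium model, giving condition~1 of Definition~\ref{def:weal-au-eqmodel}, and if $\kdint{T}{\wv}$ is a weak \kdequilibrium model then the characterization forces $T \in \SM[\Gamma^\wv] = \wv$, so condition~2 holds. For $(\Rightarrow)$, assume $\wv$ is a weak autoepistemic world view; condition~1 and the characterization give $\wv \subseteq \SM[\Gamma^\wv] \subseteq \CL[\Gamma^\wv]$, and then for any $T \in \SM[\Gamma^\wv]$, since $\wv \subseteq \CL[\Gamma^\wv]$, the characterization makes $\kdint{T}{\wv}$ a weak \kdequilibrium model, so condition~2 forces $T \in \wv$; hence $\wv = \SM[\Gamma^\wv]$.

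I expect the main obstacle to be precisely the characterization, and within it the verification that no semi-total interpretation strictly below $\kdint{T}{\wv}$ can alter the ``there'' worlds of $\wv$, plus keeping the frozen $\bL$-subformulas of the reduct consistent between $\wv$ and the real world (this is exactly what Lemma~\ref{lem:semi-tota.reduct} is for). The conceptual point worth stressing is that a single weak \kdequilibrium model only yields $\wv \subseteq \CL[\Gamma^\wv]$ for the belief part, and it is the \emph{combination} of the two fixpoint conditions defining a weak autoepistemic world view that upgrades this to $\wv = \SM[\Gamma^\wv]$.
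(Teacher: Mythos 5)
Your proof is correct and follows the same overall route as the paper: both arguments recast Definition~\ref{def:weal-au-eqmodel} as a fixpoint over weak \kdequilibrium models, use Lemma~\ref{lem:semi-tota.reduct} to pass to the subjective reduct $\Gamma^\wv$, and then use the propositional correspondence lemma to land on $\wv = \SM[\Gamma^\wv]$. The difference lies in the key intermediate characterization, and here your version is actually the more careful one. The paper asserts that $\kdint{T}{\wv} \in \WEQ[\Gamma]$ iff $\wv \cup \set{T} \subseteq \EQ[\Gamma^\wv]$, importing Lemma~\ref{lem:ht.correspondence}, which is stated for full \kdequilibrium models; but for \emph{weak} equilibrium models the minimization only ranges over semi-total competitors, and — as you correctly argue from conditions (2)–(3) of Definition~\ref{def:int.prec} — any semi-total $\cI' \prec \kdint{T}{\wv}$ must keep the belief view equal to $\wv$ and can only shrink the real-world ``here'' component. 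Hence membership in $\WEQ[\Gamma]$ requires only $T \in \SM[\Gamma^\wv]$ and $\wv \subseteq \CL[\Gamma^\wv]$, not stability of every element of $\wv$; e.g.\ for $\Gamma = \set{a \vee b}$ the interpretation $\kdint{\set{a}}{\sset{\set{a,b}}}$ is a weak \kdequilibrium model even though $\set{a,b}$ is not stable, contradicting the paper's stated ``iff''. This discrepancy washes out once both characterizations are plugged into the fixpoint equation (either way one obtains $\wv = \SM[\Gamma^\wv]$), so the theorem is unaffected, but your characterization is the one that is literally true, and your explicit analysis of which semi-total interpretations lie below a total one is exactly the step the paper leaves implicit. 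The remaining bookkeeping in both directions is sound, granting the standing convention that belief views are non-empty.
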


\begin{proof}
It is easy to see Definition~\ref{def:weal-au-eqmodel} can be rewritten as the following fixpoint equation:
\begin{gather*}
\wv \ \ = \ \ \setm{ T }{ \kdint{T}{\wv } \in \WEQ[ \Gamma ] }
\end{gather*}
where $\WEQ[ \Gamma ]$ denotes the set of all weak \kdequilibrium models of~$\Gamma$.
Furthermore, from Lemmas~\ref{lem:semi-tota.reduct} and~\ref{lem:ht.correspondence},
we have that
$\tuple{\wv,T} \in \WEQ[ \Gamma ]$ iff $\tuple{\wv,T} \in \WEQ[ \Gamma^\wv ]$
iff $\wv \cup \set{T} \subseteq \EQ[ \Gamma^\wv ]$
and, therefore, the above can be rewriting as
\begin{gather*}
\wv \ \ = \ \ \setm{ T }{ \wv \cup \set{T} \subseteq \EQ[ \Gamma^\wv ] }
\end{gather*}
which holds iff $\wv = \EQ[ \Gamma^\wv ]$.
That is, iff $\wv$ is a G91-world view of $\Gamma$.\qed
\end{proof}

Note that weak autoepistemic world views are very similar to epistemic views as defined in~\cite{WangZ05} and, in fact, we conjecture that they coincide for any theory without nested implications.
On the other hand, the following example shows that this does not hold in general:

\begin{example}
Consider the singleton theory $\Gamma = \set{ \neg\neg a \wedge \bL\varphi \to a }$
with $\varphi$ the following formula $\varphi = \neg\neg a \to a$.
Then, $\wv = \sset{\emptyset,\set{a}}$ is both a G91 and autoepistemic world view of~$\Gamma$, but not an epistemic view.
To see that $\wv$ is both a G91-world view of~$\Gamma$,
note that $\Gamma^\wv = \set{ \neg\neg a \wedge \top \to a } \equiv \set{ \neg\neg a \to a }$
which has two stable models: $\emptyset$ and $\set{a}$.
Hence, $\wv$ is a G91-world view of~$\Gamma$.
Furthermore, from Theorem~\ref{thm:g91.weak} this implies that $\wv$ is also an autoepistemic world view.
On the other hand, $\wv$ is not an epistemic view
because 
$\tuple{\wv,\emptyset,\set{a}}$ is a model of $\Gamma$ in the sense of~\cite{WangZ05}.
Note that
$\tuple{\wv,\emptyset,\set{a}} \not\models \varphi$ and that both $\emptyset$ and $\set{a}$ belong to $\wv$.
This implies that
$\tuple{\wv,\emptyset,\set{a}} \not\models \bL\varphi$
and, thus, that $\tuple{\wv,\emptyset,\set{a}}$ is a model of $\Gamma$.
On the other hand,
in our logic
$\tuple{\wv,\emptyset,\set{a}} \not\models \varphi$ 
does not imply
$\tuple{\wv,\emptyset,\set{a}} \not\models \bL\varphi$.
In fact
$\tuple{\wv,\emptyset,\set{a}} \models \bL\varphi$ holds
because both
$\tuple{\wv,\emptyset,\emptyset} \models \varphi$ 
and
$\tuple{\wv,\set{a},\set{a}} \models \varphi$ 
hold.
\qed
\end{example}

This example shows that epistemic views~\cite{WangZ05} are different from G91-world views as defined in~\cite{Truszczynski11} and from ours weak autoepistemic world views.

\begin{Proofof}{\ref{thm:g91}}
From Proposition~\ref{prop:a.world.view->weak},
we have that every autoepistemic world view is also a weak autoepistemic world view and, from Theorem~\ref{thm:g91.weak},
this implies that every autoepistemic world view is also a G91-world view.
\end{Proofof}

\section{Proof of the Main Theorem}

First note that, by taking together Theorems~\ref{thm:unfounded-free} and~\ref{thm:g91},
we immediately obtain the following result:

\begin{corollary}\label{cor:awv->g91wv}
Any autoepistemic world view~$\wv$ of any program~$\Pi$ is also an unfounded-free G91-world view.\qed
\end{corollary}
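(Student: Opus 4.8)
The plan is to read this statement exactly as the surrounding sentence advertises it: a direct conjunction of the two theorems just established, once the naming used in the appendix is matched to the naming used in the body of the paper. Concretely, an \emph{autoepistemic world view} is precisely an equilibrium world view in the sense of Definition~\ref{def:au-eqmodel} (the main text writes ``equilibrium world view'' and the appendix writes ``autoepistemic world view'', but both name the same fixpoint objects), and \emph{unfounded-free} is the appendix's label for the \emph{founded} property of Definition~\ref{def:unfounded}. Under these identifications the corollary asserts only that one and the same $\wv$ enjoys two properties, each of which has already been proved separately.

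First I would invoke Theorem~\ref{thm:unfounded-free}, which states that FAEEL satisfies foundedness (Property~\ref{property:unfounded-freedom}); by the definition of that property every world view the semantics assigns to a program is founded, so the given $\wv$ is founded, i.e.\ unfounded-free. Second I would invoke Theorem~\ref{thm:g91}, which is stated for arbitrary theories~$\Gamma$ and hence applies to the program~$\Pi$ regarded as a theory; it yields that $\wv$, being an equilibrium (autoepistemic) world view of~$\Pi$, is also a G91-world view of~$\Pi$. Conjoining the two conclusions gives that $\wv$ is an unfounded-free G91-world view, which is exactly the claim.

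The only point requiring attention---bookkeeping rather than a genuine obstacle---is to apply the two theorems to the same object and the same input. Theorem~\ref{thm:g91} is phrased at the level of general theories whereas foundedness is phrased at the level of programs, so I would note explicitly that every program is a theory, placing $\Pi$ in the scope of both results, and that ``the semantics is founded'' specialises to ``this particular $\wv$ is founded.'' No further argument is needed; in contrast to the converse inclusion---that every unfounded-free G91-world view is an equilibrium world view, which requires substantial work---this direction is immediate.
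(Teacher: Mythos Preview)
Your proposal is correct and matches the paper's own approach exactly: the corollary is stated immediately after the sentence ``by taking together Theorems~\ref{thm:unfounded-free} and~\ref{thm:g91}, we immediately obtain the following result,'' and carries no separate proof beyond that. Your identification of the terminology (autoepistemic world view $=$ equilibrium world view, unfounded-free $=$ founded) and the observation that programs are theories are precisely the bookkeeping needed.
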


The converse of Corollary~\ref{cor:awv->g91wv} is proved in two steps.
We first prove that every unfounded-free \sfmodel is also an \sfequilibrium model and then that every \sfequilibrium model which is also a weak autoepistemic world view is an autoepistemic world view.
As a Corollary, we obtain that every unfounded-free G91-world view is also an
autoepistemic world view.

\begin{theorem}\label{thm:unfounded-free.sf.iff}
Given any program~$\Pi$ and some \sfinterpretation~$\wv$,
we have that $\wv$ is an \sfequilibrium model iff $\wv$ is an unfounded-free \sfmodel.\qed
\end{theorem}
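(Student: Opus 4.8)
The plan is to establish the two implications separately. For the left-to-right direction there is essentially nothing new to do: an \sfequilibrium model is by definition a total \sfhtinterpretation\ that is an \sfhtmodel\ of $\Pi$, and on total interpretations the FAEEL satisfaction relation collapses to S5-satisfaction, so every \sfequilibrium model is in particular an \sfmodel\ of $\Pi$; that it is moreover unfounded-free is exactly the content of Theorem~\ref{thm:unfounded-free.sf}. So all the real work sits in the right-to-left direction, which I would prove by contradiction.

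Assume $\wv$ is an unfounded-free \sfmodel\ of $\Pi$ that is \emph{not} an \sfequilibrium model. As an \sfmodel, $\wv$ is a total \sfhtmodel\ of $\Pi$, so the only way it can fail to be in equilibrium is that there is an \sfhtmodel\ $\wv'$ of $\Pi$ with $\wv'\prec\wv$ for the order $\preceq$ on \sfhtinterpretations. From $\wv'$ I form the candidate set
\[
  \us \ \eqdef\ \setm{\tuple{T\setminus H,\,T}}{\tuple{H,T}\in\wv'\ \text{and}\ H\subsetneq T}.
\]
Three bookkeeping facts come first. (a) $\wv$ and $\wv'$ share the same ``there''-component, i.e.\ ${\wv'}^t=\wv^t$, directly from the two clauses defining $\preceq$. (b) $\us\neq\emptyset$: if every pair of $\wv'$ were total, those two clauses together with totality of $\wv$ would force $\wv'=\wv$, contradicting strictness. (c) For every $\tuple{X,I}=\tuple{T\setminus H,T}\in\us$ we have $I=T\in\wv$ (since $\wv$ is total and $T$ occurs as a second component in $\wv'$) and $X\cap I = T\setminus H\neq\emptyset$. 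By Definition~\ref{def:unfounded} it therefore suffices to prove that $\us$ is an unfounded set for $\Pi$ with respect to $\wv$, since this immediately contradicts unfounded-freeness of $\wv$.

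To see $\us$ is unfounded, fix $\tuple{T\setminus H,T}\in\us$ (so $\tuple{H,T}\in\wv'$, $H\subsetneq T$) and suppose, for contradiction, that some rule $r\in\Pi$ with $\Head(r)\cap(T\setminus H)\neq\emptyset$ meets conditions~1--4 of Definition~\ref{def:unfoundedset}. I will contradict $\tuple{\wv',H,T}\models r$, which holds because $\wv'$ is an \sfhtmodel\ of $\Pi$. Since $\Head(r)\cap(T\setminus H)\neq\emptyset$, $r$ is not a constraint, so $\Head(r)$ is a non-empty disjunction of atoms. As in the proof of Theorem~\ref{thm:unfounded-free.sf}, condition~3 is equivalent to $\Head(r)\cap H=\emptyset$, hence $\tuple{\wv',H,T}\not\models\Head(r)$; by the ``here''-clause of the implication semantics it then remains only to show $\tuple{\wv',H,T}\models\Body(r)$ to obtain $\tuple{\wv',H,T}\not\models r$. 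I get this literal by literal, transferring satisfaction down from $\kdint{T}{\wv^t}\models\Body(r)$ (condition~1): a positive objective atom $a$ transfers because condition~2 forbids it from $T\setminus H$, so $a\in T$ forces $a\in H$; a negative or doubly-negated objective literal depends only on the ``there''-world $T$ and so transfers verbatim; a negative subjective literal $\neg\bL l$ or $\neg\neg\bL l$ transfers because ${\wv'}^t=\wv^t$ settles its ``there''-clause from the corresponding KD45 fact and a short case split on whether $l$ is $a$, $\neg a$ or $\neg\neg a$ settles its ``here''-clause; finally a positive subjective literal $\bL a$ transfers because condition~4 says $a\notin Y = \bigcup\setm{T'\setminus H'}{\tuple{H',T'}\in\wv',\ H'\subsetneq T'}$, so from $a\in T'$ (valid since $\kdint{T}{\wv^t}\models\bL a$ and every $T'$ occurring in $\wv'$ lies in $\wv^t$) we get $a\in H'$ for \emph{every} $\tuple{H',T'}\in\wv'$, i.e.\ $\tuple{\wv',H,T}\models\bL a$.

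I expect the positive subjective literal case to be the crux: it is exactly there that condition~4 of the unfounded-set definition does its work, mirroring ``in reverse'' the role it plays in the proof of Theorem~\ref{thm:unfounded-free.sf}. Apart from that, the only care needed is in juggling the two-part (``here''/``there'') semantics of implication; but once ${\wv'}^t=\wv^t$ is established the ``there''-clauses of all body literals reduce to the plain KD45 facts already granted by condition~1, so nothing genuinely hard remains there. Putting the two directions together yields the stated equivalence.
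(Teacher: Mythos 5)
Your proposal is correct and follows essentially the same route as the paper: the forward direction is delegated to Theorem~\ref{thm:unfounded-free.sf}, and the converse builds the same candidate unfounded set $\us=\setm{\tuple{I\setminus H,I}}{\tuple{H,I}\in\wv',\,H\subsetneq I}$ from a strictly smaller \sfhtmodel{} $\wv'$ and derives the same contradiction, with conditions~2 and~4 of Definition~\ref{def:unfoundedset} handling positive objective and positive subjective body literals exactly as in the paper's argument. Your literal-by-literal transfer of body satisfaction from $\kdint{T}{\wv^t}$ down to $\tuple{\wv',H,T}$ is in fact spelled out slightly more carefully than in the paper's own proof.
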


\begin{proof}
First note that from Theorem~\ref{thm:unfounded-free.sf},
we have that any
\sfequilibrium model of any program~$\Pi$ is unfounded-free.
Suppose now, for the sake of contradiction, that $\wv$ is an \sfmodel of~$\Pi$ but not an \sfequilibrium model.
Then, there is some \sfhtmodel~$\wv'$ of $\Pi$ such that $\wv' \prec \wv$.
Let us define
\begin{gather*}
\us \ \ \eqdef \ \ 
	\setbm{ \tuple{X,I} }{ \tuple{H,I} \in \wv' \text{ with } X = I \setminus H \text{ and } X \neq \emptyset }
\end{gather*}
Then, is clear that, for every $\tuple{X,I} \in \us$ we have $I \in \wv'$ and $X \cap I \neq \emptyset$.
Hence, since $\wv$ is unfounded-free, it follows that $\us$ cannot be an unfounded-witness for $\Pi$ w.r.t. $\wv'$.
Besides, since $\wv' \prec \wv$, we have that $\us \neq \emptyset$ and, thus, there must be some $\tuple{X,I} \in \us$, atom $a \in X$ and rule $r \in \Pi$ with $a \in \Head(r)$ satisfying all the following conditions:
\begin{enumerate}
\item $\kdint{I}{\wv} \models \bigwedge\Body(r)$, and
	\label{item:1:prop:s5-unfounded-free}
\item $\Bodyrp(r) \cap X = \emptyset$, and
	\label{item:2:prop:s5-unfounded-free}
\item $(\Head(r) \setminus X) \cap I = \emptyset$.
	\label{item:4:prop:s5-unfounded-free}
\item $\Bodymp(r) \cap Y = \emptyset$, and
	\label{item:3:prop:s5-unfounded-free}
\end{enumerate}
Furthermore, since $\wv'$ is an \sfhtmodel of $\Pi$, we have that
$\kdint{H,I}{\wv'} \models r$ for every $r \in \Pi$ and
with $H = I \setminus X$
and that
one of the following must hold:
\begin{enumerate}[ start=5]
\item $\kdint{I}{\wv} \not\models \bigwedge\Body(r)$, or
	\label{item:5:prop:s5-unfounded-free}
\item $\kdint{H,I}{\wv'} \models \bigvee\Head(r)$, or
	\label{item:6:prop:s5-unfounded-free}
\item $\kdint{H,I}{\wv'} \not\models \bigwedge\Body(r)$ and $\kdint{I}{\wv} \models \bigvee\Head(r)$.
	\label{item:7:prop:s5-unfounded-free}
\end{enumerate}
Clearly, \eqref{item:5:prop:s5-unfounded-free} is in contradiction with \eqref{item:1:prop:s5-unfounded-free}
and, thus, either \eqref{item:6:prop:s5-unfounded-free} or \eqref{item:7:prop:s5-unfounded-free} must hold.
Note also that~\eqref{item:4:prop:s5-unfounded-free} implies that
$\Head(r) \cap H = \emptyset$
and, thus, \eqref{item:6:prop:s5-unfounded-free} cannot hold either.
In more detail, we have that
\\$(\Head(r) \setminus X) \cap I = \emptyset$ holds
\\iff
$(\Head(r) \setminus(I \setminus H)) \cap I = \emptyset$
\\iff
$(\Head(r) \cap \overline{ (I \cap \overline{H}))} \cap I = \emptyset$
\\iff
$(\Head(r) \cap (\overline{I} \cup H)) \cap I = \emptyset$
\\iff
$\Head(r) \cap\overline{I} \cap I = \emptyset$ and 
$\Head(r) \cap H \cap I = \emptyset$
\\iff
$\Head(r) \cap H \cap I = \emptyset$
\\iff
$\Head(r) \cap H = \emptyset$
\\iff
$\kdint{H,I}{\wv'} \not\models \bigvee\Head(r)$
Hence, \eqref{item:7:prop:s5-unfounded-free} must hold and, thus, there is some literal $L \in \Bodyp(r)$
such that
$\kdint{H,I}{\wv'} \not\models L$.
If $L \in \Bodyr(r)$, then, we have that
$L \notin H$.
Besides, from condition~\ref{item:1:prop:s5-unfounded-free}, we have that $L \in I$
and, thus, $L \in I \setminus H = X$ which is a contradiction
with condition~\ref{item:2:prop:s5-unfounded-free}.
Otherwise,
$L \in \Bodym(r)$ and we have that there is $\tuple{H',I'} \in \wv'$
such that $a \notin H'$ with $L = \bL a$.
But then, $\tuple{X',I'} \in \us$ with $X' = I' \setminus H'$ and, thus, $a \in Y$ which is a contradiction
with condition~\ref{item:3:prop:s5-unfounded-free}.
Consequently,
$\wv$ is an \sfequilibrium model.\qed
\end{proof}

\begin{lemma}\label{lem:epistemic.decomposition}
Let $\Gamma$ be a theory,
$\wv$ be an \sfequilibrium model of~$\Gamma$ and $T \in \wv$ be a propositional interpretation.
If $\tuple{\wv,T}$ is a weak \kdequilibrium model of~$\Gamma$,
then $\tuple{\wv,T}$ is also a \kdequilibrium model of~$\Gamma$.
\end{lemma}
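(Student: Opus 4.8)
The plan is to argue directly. Since $\tuple{\wv,T}$ is a weak \kdequilibrium model of $\Gamma$ it is in particular a model of $\Gamma$, so all that remains is to rule out strictly smaller models. I would therefore assume, for the sake of contradiction, that some model $\cI' = \tuple{\wv',H,T}$ of $\Gamma$ satisfies $\cI' \prec \tuple{\wv,T}$, and then split into two cases according to whether $\cI'$ is semi-total.

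The easy case is when $\cI'$ is semi-total: then $\cI'$ is itself a semi-total model of $\Gamma$ lying strictly below the total interpretation $\tuple{\wv,T}$, which contradicts directly the assumption that $\tuple{\wv,T}$ is a \emph{weak} \kdequilibrium model. So assume instead that $\cI'$ is not semi-total, i.e.\ there is a pair $\tuple{H',T'} \in \wv'$ with $H' \subsetneq T'$. The main step is to extract from $\cI'$ an \sfhtmodel of $\Gamma$ strictly below $\wv$. By Observation~\ref{obs:ht.d45models.are.s5models}, $\wv'$ is an \sfhtmodel of $\Gamma$, so it only remains to check $\wv' \prec \wv$ in the order $\preceq$ on \sfhtinterpretations. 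Once we use that $\wv$ is total, so that every element of $\wv$ has the shape $\tuple{T_i,T_i}$, conditions (2) and (3) of $\cI' \preceq \tuple{\wv,T}$ in Definition~\ref{def:int.prec} read exactly as: for every $\tuple{T_i,T_i}\in\wv$ there is $\tuple{H'_i,T_i}\in\wv'$ with $H'_i\subseteq T_i$, and every there-component occurring in $\wv'$ occurs in $\wv$ — and these are precisely the two clauses defining $\wv'\preceq\wv$. Strictness $\wv'\neq\wv$ is immediate because $\wv'$ contains the non-total pair $\tuple{H',T'}$ while $\wv$ is total. Hence $\wv'\prec\wv$ with $\wv'$ an \sfhtmodel of $\Gamma$, contradicting that $\wv$ is an \sfequilibrium model of $\Gamma$.

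Both branches end in a contradiction, so no model lies strictly below $\tuple{\wv,T}$, and therefore $\tuple{\wv,T}$ is a \kdequilibrium model of $\Gamma$. I expect the only genuinely non-mechanical point to be the bookkeeping in the second case: matching up the clauses of Definition~\ref{def:int.prec} with those of $\preceq$ on \sfhtinterpretations, and using totality of $\wv$ both to collapse the first clause and to obtain strictness; everything else is a direct appeal to Observation~\ref{obs:ht.d45models.are.s5models} and to the definitions of weak \kdequilibrium and \sfequilibrium models.
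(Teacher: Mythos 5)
Your proof is correct and follows essentially the same route as the paper's: any strictly smaller belief model is either semi-total (excluded by weak \kdequilibrium) or yields, via its $\wv'$ component, an \sfhtmodel strictly below $\wv$ (excluded by \sfequilibrium), with the order-matching between Definition~\ref{def:int.prec} and the $\preceq$ on \sfhtinterpretations handled exactly as the paper does. The only cosmetic difference is that you make the case split and the appeal to Observation~\ref{obs:ht.d45models.are.s5models} explicit, whereas the paper folds the semi-total case into the initial choice of a non-semi-total counterexample.
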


\begin{proof}
Suppose that $\tuple{\wv,T}$ is not a \kdequilibrium model of $\Gamma$.
Since $\tuple{\wv,T}$ is a weak \kdequilibrium model of $\Gamma$,
we have that $\kdint{T}{\wv}$ is model of $\Gamma$
and, thus, there must be some non-semi-total model $\cI'=\tuple{\wv',H,T}$ of $\Gamma$ such that
$\cI' \prec \kdint{T}{\wv}$.
Hence, there is $\tuple{H',T'} \in \wv'$ with $H \subset T$.
This implies that $\wv' \prec \wv$ and, since $\wv$ is an \sfequilibrium model of~$\Gamma$,
that $\wv$ is not a \sfhtmodel of~$\Gamma$.
Hence, $\wv'$ is not a \sfhtmodel of $\Gamma$.
This implies that there is some formula $\varphi \in \Gamma$ and \htinterpretation~$\tuple{H'',T''} \in \wv'$
such that $\tuple{\wv',H'',T''} \not\models \varphi$.
In its turn, this implies that $\cI'$ is not a model of $\Gamma$ which is a contradiction.
Consequently, $\tuple{\wv,T}$ is a \kdequilibrium model of~$\Gamma$.\qed
\end{proof}

\begin{proposition}\label{prop:auto-wv<->sf-eq+weak-auto-wv}
Given a theory~$\Gamma$ and an \sfinterpretation~$\wv$,
we have that $\wv$ is an autoepistemic world view iff $\wv$ is both an \sfequilibrium model and a weak autoepistemic world view of~$\Gamma$.\qed
\end{proposition}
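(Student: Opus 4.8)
The plan is to prove the two directions of the equivalence separately, assembling results already established in this appendix rather than arguing from first principles. The left-to-right direction will be immediate: if $\wv$ is an autoepistemic world view of $\Gamma$, then the first of the two conditions in Corollary~\ref{cor:autoepistemic.are.s5equilibrium} already gives that $\wv$ is an \sfequilibrium model of $\Gamma$, and Proposition~\ref{prop:a.world.view->weak} already gives that $\wv$ is a weak autoepistemic world view of $\Gamma$; nothing further is needed. So the content is entirely in the converse.

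\textbf{The right-to-left direction.} Here I would assume $\wv$ is both an \sfequilibrium model and a weak autoepistemic world view of $\Gamma$, and verify the fixpoint equation of Definition~\ref{def:au-eqmodel}, namely $\wv = \setm{T}{\kdint{T}{\wv} \in \MEQ[\Gamma]}$, by a double inclusion. For $\subseteq$ I would take $T \in \wv$: the first clause of Definition~\ref{def:weal-au-eqmodel} (available since $\wv$ is a weak autoepistemic world view) makes $\kdint{T}{\wv}$ a weak \kdequilibrium model of $\Gamma$, and then, using that $\wv$ is an \sfequilibrium model and $T \in \wv$, Lemma~\ref{lem:epistemic.decomposition} upgrades this to $\kdint{T}{\wv}$ being a \kdequilibrium model, i.e. $\kdint{T}{\wv} \in \MEQ[\Gamma]$. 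For $\supseteq$ I would start from $\kdint{T}{\wv} \in \MEQ[\Gamma]$ for an arbitrary propositional interpretation $T$; since every \kdequilibrium model is in particular a weak \kdequilibrium model, $\kdint{T}{\wv}$ is a weak \kdequilibrium model of $\Gamma$, so the second clause of Definition~\ref{def:weal-au-eqmodel} forces $T \in \wv$ (otherwise $T$ would contradict that clause). The two inclusions give the fixpoint equation, hence $\wv$ is an autoepistemic world view.

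\textbf{The hard part.} All the genuine content lives inside Lemma~\ref{lem:epistemic.decomposition}, and that is where the real argument would have to go if it were not already available. Intuitively, the weak-equilibrium condition on $\kdint{T}{\wv}$ only compares it against \emph{semi-total} belief models (belief view still total, only the here-part of the real world allowed to shrink), whereas the \sfequilibrium condition on $\wv$ only shrinks here-parts inside the belief view with the real world pinned to $\wv$; the lemma asserts that jointly these two criteria already rule out \emph{every} strictly smaller belief model, including the mixed ones in which the real world's here-part and some belief-view here-parts shrink simultaneously. To prove it directly, given a smaller belief model $\kdint{H,T}{\wv'} \prec \kdint{T}{\wv}$, I would case-split on whether $\wv' = \wv$ (so the belief view stays total, $H \subset T$ strictly, and $\kdint{H,T}{\wv}$ is a semi-total model below $\kdint{T}{\wv}$, contradicting weak-equilibrium) or some $\tuple{H_i,T_i} \in \wv'$ has $H_i \subset T_i$ (so $\cI'$ is non-semi-total, $T_i \in \wv$, and by Observation~\ref{obs:ht.d45models.are.s5models} the belief view $\wv'$ is itself an \sfhtmodel of $\Gamma$ with $\wv' \prec \wv$, contradicting \sfequilibrium). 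With Lemma~\ref{lem:epistemic.decomposition} in hand, the proposition itself is just the bookkeeping described above.
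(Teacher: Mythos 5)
Your proposal is correct and follows essentially the same route as the paper: the forward direction is the same combination of Corollary~\ref{cor:autoepistemic.are.s5equilibrium} and Proposition~\ref{prop:a.world.view->weak}, and the converse reduces, exactly as in the paper, to Lemma~\ref{lem:epistemic.decomposition} for the inclusion $\wv \subseteq \setm{T}{\kdint{T}{\wv} \in \MEQ[\Gamma]}$ and to the fact that every \kdequilibrium model is a weak \kdequilibrium model for the reverse inclusion (the paper spells the latter out as an explicit contradiction via a semi-total model, but it is the same observation). Your sketch of Lemma~\ref{lem:epistemic.decomposition}, with the case split on whether the smaller belief model is semi-total or not, also matches the paper's argument, so nothing further is needed.
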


\begin{proof}
First note that from Corollary~\ref{cor:autoepistemic.are.s5equilibrium} and Proposition~\ref{prop:a.world.view->weak},
we have that every autoepistemic world view is both an \sfequilibrium model and a weak autoepistemic world view of~$\Gamma$.
The other way around.
Since $\wv$ is a weak autoepistemic equilibrium model of $\Gamma$,
we have that $\tuple{\wv,T}$ is a weak \kdequilibrium model of $\Gamma$ for every $T \in \wv$
which, from Lemma~\ref{lem:epistemic:prop:decomposition},
implies that
$\tuple{\wv,T}$ is a \kdequilibrium model of $\Gamma$ for every $T \in \wv$.
Hence, it only remains to be shown that the following condition hold:
\begin{enumerate}
\item[] there is no propositional interpretation $T$ such that model $\tuple{\wv,T}$ is an \kdequilibrium model of $\Gamma$ and $T \notin \wv$.
\end{enumerate}
Suppose, for the sake of contradiction, that
there is some propositional interpretation $T$ such that model $\tuple{\wv,T}$ is an \kdequilibrium model of $\Gamma$ and $T \notin \wv$.
First, this implies that $\tuple{\wv,T}$ is an \kdmodel of $\Gamma$.
Then, since $\wv$ is a weak autoepistemic equilibrium model of $\Gamma$ and $T \notin \wv$,
it follows that 
$\tuple{\wv,T}$
cannot be a weak \kdequilibrium model of $\Gamma$ and, thus,
there is some semi-total model $\cJ$ is a model of $\Gamma$ such that $\cJ \prec \tuple{\wv,T}$.
But every semi-total model is also a model, so this is a contradiction with the fact that $\tuple{\wv,T}$ is a \kdequilibrium model of $\Gamma$.
Consequently, $\wv$ must be an autoepistemic world view of $\Gamma$.\qed
\end{proof}

\begin{corollary}\label{cor:auto-wv<->unfounded-free.weak}
Given any program~$\Pi$ and some \sfinterpretation~$\wv$,
we have that $\wv$ is an autoepistemic world view of $\Pi$ iff $\wv$ is an unfounded-free weak autoepistemic world view.\qed
\end{corollary}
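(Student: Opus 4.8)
The plan is to treat Corollary~\ref{cor:auto-wv<->unfounded-free.weak} as a bookkeeping consequence of the three results that precede it --- Proposition~\ref{prop:auto-wv<->sf-eq+weak-auto-wv}, Theorem~\ref{thm:unfounded-free.sf.iff} and Theorem~\ref{thm:unfounded-free.sf} --- and then to read off the Main Theorem from it. For the left-to-right implication I would start from an autoepistemic world view~$\wv$ of a program~$\Pi$. Proposition~\ref{prop:auto-wv<->sf-eq+weak-auto-wv} immediately splits this into ``$\wv$ is an \sfequilibrium model'' and ``$\wv$ is a weak autoepistemic world view'', while Theorem~\ref{thm:unfounded-free.sf} turns the first conjunct into ``$\wv$ is unfounded-free''. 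Combining the two surviving facts gives that $\wv$ is an unfounded-free weak autoepistemic world view, which is exactly the right-hand side.

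For the converse I would take an unfounded-free weak autoepistemic world view~$\wv$ and first check that $\wv$ is an \sfmodel of~$\Pi$: by condition~1 of Definition~\ref{def:weal-au-eqmodel} each $\kdint{T}{\wv}$ with $T \in \wv$ is a weak \kdequilibrium model --- hence a belief model --- of~$\Pi$, and by Observation~\ref{obs:ht.d45models.are.s5models} this already makes $\wv$ an \sfmodel of~$\Pi$. Since $\wv$ is moreover unfounded-free, Theorem~\ref{thm:unfounded-free.sf.iff} upgrades it to an \sfequilibrium model. Now $\wv$ satisfies both halves of the characterisation in Proposition~\ref{prop:auto-wv<->sf-eq+weak-auto-wv}, so reading that proposition from right to left gives that $\wv$ is an autoepistemic world view of~$\Pi$. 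The Main Theorem then drops out by composing Corollary~\ref{cor:auto-wv<->unfounded-free.weak} with Theorem~\ref{thm:g91.weak}: for a program~$\Pi$, an \sfinterpretation is an equilibrium (autoepistemic) world view of~$\Pi$ iff it is an unfounded-free weak autoepistemic world view iff it is an unfounded-free --- that is, founded --- G91-world view. (The forward half is in fact already packaged as Corollary~\ref{cor:awv->g91wv}, obtained by merging Theorems~\ref{thm:unfounded-free} and~\ref{thm:g91}.)

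The corollary itself is routine, so the real difficulty is quarantined inside the lemmas it invokes, and I expect the main obstacle to be the nontrivial direction of Theorem~\ref{thm:unfounded-free.sf.iff}: from a strictly smaller \sfhtmodel $\wv' \prec \wv$ one must manufacture a concrete unfounded set --- the pairs $\tuple{I \setminus H, I}$ for $\tuple{H,I} \in \wv'$ with $I \setminus H \neq \emptyset$ --- and then match each clause of the unfounded-set definition against rule satisfaction in~$\wv'$, the delicate point being condition~4 (the $\bL a$ clause), where a failing positive subjective literal has to be traced back through~$\wv'$ to a further pair of the constructed set. Within the corollary proof proper, the only step that needs care is ``weak autoepistemic world view $\Rightarrow$ \sfmodel'', since Theorem~\ref{thm:unfounded-free.sf.iff} is phrased for \sfmodels; as noted above, condition~1 of Definition~\ref{def:weal-au-eqmodel} together with Observation~\ref{obs:ht.d45models.are.s5models} is precisely what supplies it.
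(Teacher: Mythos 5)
Your proposal is correct and follows essentially the same route as the paper: both directions are obtained by combining Proposition~\ref{prop:auto-wv<->sf-eq+weak-auto-wv} with Theorem~\ref{thm:unfounded-free.sf.iff}, plus the observation that a weak autoepistemic world view is already an \sfmodel (which the paper dismisses as ``easy to see'' and you justify, reasonably, via condition~1 of Definition~\ref{def:weal-au-eqmodel} and Observation~\ref{obs:ht.d45models.are.s5models}). The only cosmetic difference is that you split the equivalence into two implications and cite Theorem~\ref{thm:unfounded-free.sf} directly for the forward half, whereas the paper chains the two ``iff'' statements in one line.
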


\begin{proof}
From Proposition~\ref{prop:auto-wv<->sf-eq+weak-auto-wv} and Theorem~\ref{thm:unfounded-free.sf.iff}, we have that
$\wv$ is an autoepistemic world view iff $\wv$ is both an \sfequilibrium model and a weak autoepistemic world view iff $\wv$ is an unfounded-free \sfmodel and a weak autoepistemic world view.
Finally, it is easy to see that every weak autoepistemic world view is also an \sfmodel and, thus, we get that
any $\sfinterpretation$ is an autoepistemic world view iff  is an unfounded-free weak autoepistemic world view.\qed
\end{proof}

\begin{proofof}{the Main Theorem}
Let~$\wv$ be any \sfinterpretation.
Then, from Corollary~\ref{cor:auto-wv<->unfounded-free.weak},
we have that $\wv$ is an autoepistemic world view iff $\wv$ is an unfounded-free weak autoepistemic world view.
Furthermore, from Theorem~\ref{thm:g91.weak},
we have that $\wv$ is a  weak autoepistemic world view iff $\wv$ is a G91-world view.
Taking these two facts together, we get that
$\wv$ is an autoepistemic world view iff $\wv$ is an unfounded-free G91-world view.
\end{proofof}

\section{Proofs of Propositions~\ref{prop:neg-L} and~\ref{prop:properties}}

\begin{proposition}\label{prop:negation.there}
Let $\varphi$ be a formula 
and let $\tuple{\wv,H,T}$ be an interpretation.
Then, $\tuple{\wv,H,T}\models \neg\varphi$ iff $\tuple{\wv^t,T,T} \not\models \varphi$ iff $\tuple{\wv^t,T,T} \models \neg\varphi$.\qed
\end{proposition}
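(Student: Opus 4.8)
The plan is to unfold $\neg\varphi$ as $\varphi\to\bot$, read off the satisfaction clause for implication, and then use Persistence (Proposition~\ref{prop:persistance}) to drop the redundant ``here'' condition.

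First I would compute, for the given belief interpretation $\cI=\tuple{\wv,H,T}$, that $\cI\models\neg\varphi$ iff $\cI\models\varphi\to\bot$, which by the clause for $\to$ means: (i) $\cI\not\models\varphi$ or $\cI\models\bot$, and (ii) $\tuple{\wv^t,T,T}\not\models\varphi$ or $\tuple{\wv^t,T,T}\models\bot$. Since no interpretation satisfies $\bot$, this collapses to the conjunction ``$\cI\not\models\varphi$ and $\tuple{\wv^t,T,T}\not\models\varphi$''. Now I would invoke Persistence: $\tuple{\wv,H,T}\models\varphi$ implies $\tuple{\wv^t,T,T}\models\varphi$, so contrapositively $\tuple{\wv^t,T,T}\not\models\varphi$ already forces $\tuple{\wv,H,T}\not\models\varphi$. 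Hence the conjunction is equivalent to its second conjunct alone, and we obtain $\tuple{\wv,H,T}\models\neg\varphi$ iff $\tuple{\wv^t,T,T}\not\models\varphi$, which is the first ``iff'' of the statement.

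For the second ``iff'' I would re-run the same computation on the total interpretation $\tuple{\wv^t,T,T}$ in place of $\cI$: its ``there'' projection is again $\wv^t$ and its real world is already total, so clauses (i) and (ii) coincide and the conjunction degenerates to the single condition $\tuple{\wv^t,T,T}\not\models\varphi$; equivalently, one can just recall that on total belief interpretations satisfaction collapses to \mbox{KD45}, where $\models\varphi\to\bot$ is the classical $\not\models\varphi$. Chaining the two equivalences yields the full statement.

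I do not expect a genuine obstacle here; the only point that needs care is that the redundancy of clause (i) for negation really does rely on Persistence (it is false for general implications), and that the notion of \emph{total} interpretation is applied correctly when the argument is reused for $\tuple{\wv^t,T,T}$.
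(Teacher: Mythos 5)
Your proof is correct and follows essentially the same route as the paper: unfold $\neg\varphi$ as $\varphi\to\bot$ via the satisfaction clause for implication, observe that $\bot$ is never satisfied, and use Persistence (Proposition~\ref{prop:persistance}) to show the ``here'' conjunct is redundant, with the total case $\tuple{\wv^t,T,T}$ handled by the collapse to the KD45 clause. The only difference is cosmetic ordering (the paper treats the total interpretation first), and your explicit remark that the reduction relies on Persistence and fails for general implications is a correct and worthwhile observation.
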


\begin{proof}
By definition,
we have that
$\tuple{\wv^t,T,T} \models \neg\varphi$
iff
$\tuple{\wv^t,T,T} \models \varphi\to\bot$
iff
$\tuple{\wv^t,T,T} \not\models \varphi$.
Furthermore, by definition, we have that
$\tuple{\wv,H,T}\models \neg\varphi$
iff both
\mbox{$\tuple{\wv,H,T} \not\models \varphi$}
and
$\tuple{\wv,T,T} \not\models \varphi$.
Finally, since Proposition~\ref{prop:pers},
we have that
$\tuple{\wv,H,T}\models \varphi$
implies
$\tuple{\wv,T,T}\models\varphi$
we get that
$\tuple{\wv,H,T}\models \neg\varphi$
iff
$\tuple{\wv,T,T} \not\models \varphi$.\qed
\end{proof}

\begin{lemma}\label{lem:satisfication:prop:neg-L}
Let $\varphi$ be a formula where all occurrences of $\bL$ are in the scope of negation
and let $\tuple{\wv,H,T}$ be an interpretation.
Then, $\tuple{\wv,H,T}\models \varphi$ iff $\tuple{\wv^t,H,T}\models \varphi$.\qed
\end{lemma}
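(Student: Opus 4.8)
The plan is to prove the biconditional by structural induction on $\varphi$. The driving observation is purely syntactic: whenever $\varphi$ has one of the shapes $\psi_1 \wedge \psi_2$, $\psi_1 \vee \psi_2$, or $\psi_1 \to \psi_2$ with $\psi_2 \neq \bot$, its principal connective is not a negation, so every negated subformula of $\varphi$ lies entirely inside $\psi_1$ or inside $\psi_2$; hence the hypothesis that all occurrences of $\bL$ lie in the scope of negation is inherited by both immediate subformulas, and the induction hypothesis is available for them. Observe also that $\varphi$ is never of the form $\bL\psi$ under the hypothesis, since a top-level $\bL$ lies in the scope of no negation, so that case does not arise.

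For the base cases, $\varphi = \bot$ is immediate and $\varphi = a$ holds because satisfaction of an atom depends only on the ``here'' component $H$, which is common to $\tuple{\wv,H,T}$ and $\tuple{\wv^t,H,T}$. The cases $\varphi = \psi_1 \wedge \psi_2$ and $\varphi = \psi_1 \vee \psi_2$ follow by pushing the induction hypothesis through the respective satisfaction clause. For $\varphi = \psi_1 \to \psi_2$ with $\psi_2 \neq \bot$ I would expand the two-part clause for implication: its first conjunct only compares satisfaction of $\psi_1$ and of $\psi_2$ at $\tuple{\wv,H,T}$ with satisfaction at $\tuple{\wv^t,H,T}$, which the induction hypothesis settles; its second conjunct mentions only $\kdint{T}{\wv^t}$ for the interpretation $\tuple{\wv,H,T}$ and only $\kdint{T}{(\wv^t)^t}$ for the interpretation $\tuple{\wv^t,H,T}$, and since $\wv^t$ is already total we have $(\wv^t)^t = \wv^t$, so these two second conjuncts are literally the same statement. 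Hence the whole clause transfers.

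The one case where the induction hypothesis cannot be used on an immediate subformula is $\varphi = \neg\chi$, i.e.\ $\chi \to \bot$, because $\chi$ may contain a top-level $\bL$ and so need not satisfy the restriction. Here I would invoke Proposition~\ref{prop:negation.there} directly: it gives $\tuple{\wv,H,T} \models \neg\chi$ iff $\tuple{\wv^t,T,T} \models \neg\chi$, and applying the same proposition to the interpretation $\tuple{\wv^t,H,T}$ gives $\tuple{\wv^t,H,T} \models \neg\chi$ iff $\tuple{(\wv^t)^t,T,T} \models \neg\chi$, which equals $\tuple{\wv^t,T,T} \models \neg\chi$ by idempotence of $(\cdot)^t$; so both sides agree. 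I expect the main obstacle to be precisely this bookkeeping: arranging the induction so that the negation case is isolated and discharged through Proposition~\ref{prop:negation.there} rather than by a recursive appeal, and verifying that no other case secretly needs the induction hypothesis on a formula that violates the ``$\bL$ under negation'' restriction.
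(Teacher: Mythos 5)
Your proof is correct and takes essentially the same approach as the paper's: an induction on the structure of $\varphi$ whose only non-routine case, a top-level negation, is discharged by applying Proposition~\ref{prop:negation.there} to both $\tuple{\wv,H,T}$ and $\tuple{\wv^t,H,T}$ and using $(\wv^t)^t=\wv^t$, exactly as you do. The paper compresses the remaining connective cases into ``by induction on the structure of the formula''; your explicit handling of the implication clause and your remark that the hypothesis rules out a top-level $\bL$ merely spell out what that appeal hides.
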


\begin{proof}
In case that $\varphi \in \at$ is an atom, we have that
$\tuple{\wv,H,T}\models \varphi$ iff $\varphi \in H$ iff
$\tuple{\wv^t,H,T}\models \varphi$.
In case $\varphi = \neg\psi$, 
from Proposition~\ref{prop:negation.there},
we have that
$\tuple{\wv,H,T}\models \varphi$
iff $\tuple{\wv^t,T,T}\models \varphi$
iff $\tuple{\wv^t,H,T}\models \varphi$.
Otherwise, the proof follows by induction in the structure of the formula.\qed
\end{proof}

\begin{lemma}\label{lem:wv:prop:neg-L}
Let $\Gamma$ be a formula where all occurrences of $\bL$ are in the scope of negation
and let $\wv$ be a weak autoepistemic world view.
Then, $\wv$ is also an autoepistemic world view.\qed
\end{lemma}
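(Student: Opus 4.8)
The plan is to reduce the statement to a claim about \sfequilibrium models. By Proposition~\ref{prop:auto-wv<->sf-eq+weak-auto-wv}, a belief view is an autoepistemic world view iff it is simultaneously a weak autoepistemic world view and an \sfequilibrium model; since $\wv$ is a weak autoepistemic world view by hypothesis, it suffices to show that $\wv$ is an \sfequilibrium model of $\Gamma$. One half is immediate: for every $T\in\wv$, Definition~\ref{def:weal-au-eqmodel} gives that $\tuple{\wv,T}$ is a weak \kdequilibrium model, hence a belief model, of $\Gamma$, which in particular forces $\tuple{\wv,T',T'}\models\varphi$ for all $T'\in\wv$ and all $\varphi\in\Gamma$; thus $\wv$ is an \sfhtmodel of $\Gamma$ (cf.\ Observation~\ref{obs:ht.d45models.are.s5models}). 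So the real work is establishing minimality.

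Assume, towards a contradiction, that there is an \sfhtmodel $\wv'$ of $\Gamma$ with $\wv'\prec\wv$ in the order on \sfhtinterpretations. Because $\wv$ is total, the two clauses of that order force the ``there''-components of $\wv'$ to coincide with $\wv$, i.e.\ $(\wv')^t=\wv$; and since $\wv'\neq\wv$, at least one pair $\tuple{H',T'}\in\wv'$ is non-total, with $H'\subset T'$ and $T'\in\wv$. The crux is to turn $\wv'$ into an honest \emph{semi-total} belief model lying strictly below $\tuple{\wv,T'}$. Take $\cJ=\tuple{(\wv')^t,H',T'}$, which equals $\tuple{\wv,H',T'}$. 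To see $\cJ$ is a belief model of $\Gamma$: for every $\varphi\in\Gamma$ and $\tuple{H,T}\in\wv'$ we have $\tuple{\wv',H,T}\models\varphi$ because $\wv'$ is an \sfhtmodel, and then Lemma~\ref{lem:satisfication:prop:neg-L}---the only place where the hypothesis that every $\bL$ sits under a negation is used---yields $\tuple{(\wv')^t,H,T}\models\varphi$; instantiating at $\tuple{H',T'}$ gives $\cJ\models\varphi$, and combining this with Persistence (Proposition~\ref{prop:persistance}) at the remaining pairs gives $\tuple{(\wv')^t,T_k,T_k}\models\varphi$ for every ``there''-component $T_k\in(\wv')^t=\wv$. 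Hence $\cJ$ is a semi-total belief model of $\Gamma$, and since it shares its modal part with $\tuple{\wv,T'}$ and $H'\subset T'$, Definition~\ref{def:int.prec} gives $\cJ\prec\tuple{\wv,T'}$. But $T'\in\wv$ and $\wv$ is a weak autoepistemic world view, so $\tuple{\wv,T'}$ is a weak \kdequilibrium model and admits no semi-total model strictly below it---contradiction. Therefore $\wv$ is an \sfequilibrium model, and, being also a weak autoepistemic world view, an autoepistemic world view of $\Gamma$.

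The main obstacle is precisely this middle step. A smaller \sfhtmodel $\wv'$ can ``cheat'' by shrinking $H$-components that lie inside the scope of $\bL$, so it does not a priori give a witness against a weak \kdequilibrium model; the whole role of the negation restriction (Lemma~\ref{lem:satisfication:prop:neg-L}) is that such shrinking is invisible to satisfaction, so the modal slot may be replaced by its totalisation, and the order relation then pins that totalisation down to be $\wv$ itself. Once this is recognised, the non-total ``real world'' $\tuple{H',T'}$ extracted from $\wv'$ is exactly the semi-total counterexample needed, and the remaining checks---idempotence of $(\cdot)^t$, the bookkeeping against Definitions~\ref{def:int.prec} and~\ref{def:weal-au-eqmodel}, and the induction hidden inside Lemma~\ref{lem:satisfication:prop:neg-L}---are routine.
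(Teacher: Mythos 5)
Your proof is correct. The core mechanism is the same as the paper's: the negation-scope hypothesis, via Lemma~\ref{lem:satisfication:prop:neg-L}, lets you replace the modal slot of a hypothetical smaller model by its totalisation $(\wv')^t=\wv$, and the resulting \emph{semi-total} model strictly below some $\tuple{\wv,T'}$ with $T'\in\wv$ contradicts the weak \kdequilibrium condition. Where you differ is in the decomposition: the paper argues directly on the two clauses of the world-view definition, supposing some $\tuple{\wv,T}$ with $T\in\wv$ admits a non-semi-total belief model $\tuple{\wv',H,T}\prec\tuple{\wv,T}$ and extracting the offending non-total pair from $\wv'\cup\{\tuple{H,T}\}$; you instead reduce everything to showing that $\wv$ is an \sfequilibrium model and then invoke Proposition~\ref{prop:auto-wv<->sf-eq+weak-auto-wv} to conclude. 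Your route buys a cleaner minimality argument (the S5 ordering has no separate ``real world'' component to bookkeep, and the observation that $\wv'\prec\wv$ with $\wv$ total forces $(\wv')^t=\wv$ is immediate), at the cost of leaning on the bridge proposition from the Main Theorem's proof --- which is legitimate, since that proposition is established independently of this lemma and no circularity arises. All the individual steps check out: the identification $(\wv')^t=\wv$, the existence of a non-total pair in $\wv'$, the verification that $\tuple{\wv,H',T'}$ is a semi-total belief model (using persistence for the pairs of $\wv$ and the negation-scope lemma for the real-world pair), and the strict comparison $\tuple{\wv,H',T'}\prec\tuple{\wv,T'}$ under Definition~\ref{def:int.prec}.
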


\begin{proof}
Since $\wv$ be a weak autoepistemic world view we have that
\begin{enumerate}
\item $\kdint{T}{\wv}$ is a weak \kdequilibrium model of $\Gamma$ for every 
\mbox{$T \in \wv$}, and
	\label{itme:1:lem:models:prop:neg-L}

\item there not exists any propositional interpretation $T$ such that $\tuple{\wv,T}$ is a weak \kdequilibrium model of $\Gamma$ and $T \notin \wv$.
	\label{itme:2:lem:models:prop:neg-L}
\end{enumerate}
First note that condition~\ref{itme:2:lem:models:prop:neg-L}
immediately implies that  there not exists any propositional interpretation $T$ such that $\tuple{\wv,T}$ is a \kdequilibrium model of $\Gamma$ and $T \notin \wv$.
Suppose now, for the sake of contradiction, that
\begin{enumerate}
\item[] there is \mbox{$T \in \wv$} such that $\kdint{T}{\wv}$ is a not \kdequilibrium model of $\Gamma$.
\end{enumerate}
Hence, there is a non-semi-total model $\cI = \kdint{H,T}{\wv'}$ of $\Gamma$ s.t. $\cI \prec \kdint{T}{\wv}$.
Furthermore, since $\cI$ is a model of $\Gamma$,
we have that
$\kdint{H',T'}{\wv'} \models \varphi$  all $\varphi \in \Gamma$ and all $\tuple{H',T'} \in \wv \cup \set{\tuple{H,T}}$
which, from Lemma~\ref{lem:satisfication:prop:neg-L} and the fact that all occurrences of $\bL$ in $\varphi$ are in the scope of negation,
implies that
$\kdint{H',T'}{\wv} \models \varphi$ holds for all $\varphi \in \Gamma$ and all $\tuple{H',T'} \in \wv \cup \set{\tuple{H,T}}$.
Note that, since $\cI \prec \kdint{T}{\wv}$,
there is $\tuple{H'',T''} \in \wv' \cup \set{\tuple{H,T}}$ with $H'' \subset T''$
and we have that $\kdint{H'',T''}{\wv}$ is a semi-total model of $\Gamma$ and that $\kdint{H'',T''}{\wv} \prec \kdint{T''}{\wv}$.
Note also that $\tuple{H'',T''} \in \wv'$ implies $T'' \in \wv$ and, thus, the above is a contradiction with condition~\ref{itme:1:lem:models:prop:neg-L}.
Hence, we have that
 $\kdint{T}{\wv}$ is a \kdequilibrium model of $\Gamma$ for every \mbox{$T \in \wv$}.
Consequently, $\wv$ is an autoepistemic world view.\qed
\end{proof}

\begin{Proofof}{\ref{prop:neg-L}}
From Proposition~\ref{prop:a.world.view->weak},
we have that every autoepistemic world view~$\wv$ of any theory~$\Gamma$ is also a weak autoepistemic world view.
Furthermore, since all occurrences of $\bL$ in $\Gamma$ are in the scope of negation,
from Lemma~\ref{lem:wv:prop:neg-L}, the converse also holds.
\end{Proofof}

\begin{lemma}\label{lem:subjective.formulas}
Let $\varphi$ be a formula in which every atom is in the scope of the modal operator~$\bL$
and $\cI=\tuple{\wv,H,T}$ be some interpretation.
Then, $\cI \models \varphi$ iff $\wv^t$ is an \sfmodel of $\varphi$.
\end{lemma}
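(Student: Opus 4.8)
The plan is a structural induction on the subjective formula $\varphi$. The guiding observation is that, since every atom of $\varphi$ occurs inside the scope of some $\bL$, the truth of $\cI\models\varphi$ cannot depend on the real-world pair $\tuple{H,T}$ of $\cI=\tuple{\wv,H,T}$: atoms are never evaluated outside a $\bL\psi$, and the clause for $\bL\psi$ inspects only the belief view $\wv$. So everything reduces to comparing how a subjective formula is evaluated over the here/there pairs of $\wv$ with how it is evaluated over the total view $\wv^t$. I will also use that, when $\varphi$ is built with a Boolean connective, its immediate subformulas are again formulas all of whose atoms lie in the scope of $\bL$, so the induction hypothesis applies to them; and that, for such a subformula $\chi$, the value $\kdint{T}{\wv^t}\models\chi$ is the same for every $T\in\wv^t$, so it is equivalent to ``$\wv^t$ is an \sfmodel of $\chi$''.

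For the Boolean cases: if $\varphi=\bot$ then neither $\cI\models\bot$ nor ``$\wv^t$ is an \sfmodel of $\bot$'' holds --- the latter because $\wv\neq\emptyset$ forces $\wv^t\neq\emptyset$ --- so the equivalence is vacuous. For $\varphi=\varphi_1\wedge\varphi_2$ and $\varphi=\varphi_1\vee\varphi_2$ the satisfaction clauses for $\cI$ and for a total interpretation coincide, so the claim follows at once from the induction hypothesis applied to $\varphi_1$ and $\varphi_2$. The case $\varphi=\varphi_1\to\varphi_2$ is the delicate Boolean one, because the FAEEL clause carries the extra conjunct (ii) ``$\kdint{T}{\wv^t}\not\models\varphi_1$ or $\kdint{T}{\wv^t}\models\varphi_2$''. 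By the observations above, $\kdint{T}{\wv^t}\models\varphi_k$ is equivalent to ``$\wv^t$ is an \sfmodel of $\varphi_k$'', hence, by the induction hypothesis, to $\cI\models\varphi_k$; so conjunct (ii) is equivalent to conjunct (i), and both reduce to ``$\wv^t$ is an \sfmodel of $\varphi_1\to\varphi_2$'', which closes the case.

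The crux --- and the step I expect to be the main obstacle --- is the base case $\varphi=\bL\psi$, in which $\psi$ is an arbitrary formula: it need not be subjective, and it may even be a bare atom. Unfolding, $\cI\models\bL\psi$ asserts $\kdint{H_i,T_i}{\wv}\models\psi$ for every $\tuple{H_i,T_i}\in\wv$, whereas ``$\wv^t$ is an \sfmodel of $\bL\psi$'' asserts $\kdint{T_i}{\wv^t}\models\psi$ for every $T_i\in\wv^t$. The forward implication is immediate: each conjunct $\kdint{H_i,T_i}{\wv}\models\psi$ yields $\kdint{T_i}{\wv^t}\models\psi$ by Persistence (Proposition~\ref{prop:persistance}). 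The converse is the hard direction, where one must transfer satisfaction of $\psi$ from the total worlds of $\wv^t$ back down onto the here/there pairs of $\wv$; I expect this to be the technical heart of the proof, handled by a separate analysis of the possibly non-subjective $\psi$ and, where needed, by invoking whatever additional structure of the belief view is available. Once the $\bL\psi$ case is in place, the induction closes and the lemma follows.
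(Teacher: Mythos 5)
Your induction scaffolding and the Boolean cases are fine, and you have correctly located the entire difficulty in the converse direction of the base case $\varphi=\bL\psi$ --- but you leave that direction open, and it cannot be closed: for a non-total belief view the stated equivalence is simply false. Take $\varphi=\bL a$ and $\wv=\sset{\tuple{\emptyset,\set{a}}}$, so that $\wv^t=\sset{\set{a}}$. Then $\wv^t$ is an \sfmodel of $\bL a$ (every $T\in\wv^t$ contains $a$), yet $\cI=\tuple{\wv,H,T}\not\models\bL a$, because the clause for $\bL$ requires $\kdint{H_i,T_i}{\wv}\models a$, i.e.\ $a\in H_i$, for every $\tuple{H_i,T_i}\in\wv$, and here the only ``here'' component is $\emptyset$. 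Persistence gives you exactly the direction you already proved; there is no route back from $\wv^t$ to $\wv$.

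The paper's own (very terse) proof sidesteps this by in effect establishing a different right-hand side: for $\varphi=\bL\psi$ it shows $\cI\models\varphi$ iff $\tuple{\wv,H',T'}\models\psi$ for all $\tuple{H',T'}\in\wv$, i.e.\ iff $\wv$ itself is an \sfhtmodel of $\varphi$, and then inducts --- the belief view is never replaced by its total projection. The identification of that condition with ``$\wv^t$ is an \sfmodel of $\varphi$'' is legitimate only when $\cI$ is semi-total, i.e.\ when every $\tuple{H_i,T_i}\in\wv$ satisfies $H_i=T_i$; in that case $\wv=\wv^t$ and the FAEEL clause for $\bL$ literally coincides with the KD45/S5 one, so your ``hard direction'' becomes trivial. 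The restriction is harmless for the paper because the lemma is only ever invoked (in the proof of the subjective-constraint lemma) on interpretations of the form $\tuple{\wv^t,T',T'}$. So the repair is not a cleverer argument for the converse of the $\bL\psi$ case, but a weakening of the statement: either restrict $\cI$ to be semi-total, or replace ``$\wv^t$ is an \sfmodel of $\varphi$'' by ``$\wv$ is an \sfhtmodel of $\varphi$''.
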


\begin{proof}
In case $\varphi = \bL \psi$,
we have that
$\cI \models \varphi$
iff
$\tuple{\wv,H',T'} \models \psi$ for all \mbox{$\tuple{H',T'} \in \wv$}
iff $\wv$ is an \sfhtmodel of $\varphi$.
The rest of the proof follows by induction on the structure of~$\varphi$.\qed
\end{proof}

\begin{lemma}\label{lem:subjective.constraints}
Let $\varphi$ be a formula in which every atom is in the scope of the modal operator~$\bL$
and $\cI=\tuple{\wv,H,T}$ be some interpretation.
Then, $\cI$ is a model of $\bot\leftarrow\varphi$ iff $\wv^t$ is an \sfmodel of $\bot\leftarrow\varphi$.
\end{lemma}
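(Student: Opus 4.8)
The plan is to observe that $\bot\leftarrow\varphi$ is itself a subjective formula and then simply invoke Lemma~\ref{lem:subjective.formulas}. Indeed, $\bot\leftarrow\varphi$ abbreviates $\varphi\to\bot$, and its atom occurrences are exactly those of $\varphi$ (the constant $\bot$ contributes none); since every atom of $\varphi$ lies in the scope of $\bL$, the same holds for $\bot\leftarrow\varphi$, so Lemma~\ref{lem:subjective.formulas} is applicable with $\bot\leftarrow\varphi$ in place of $\varphi$.

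First I would unfold the definition of ``$\cI$ is a model of $\bot\leftarrow\varphi$'': this means $\kdint{H',T'}{\wv}\models\bot\leftarrow\varphi$ for every $\tuple{H',T'}\in\wv\cup\set{\tuple{H,T}}$. Next, for each such $\tuple{H',T'}$, Lemma~\ref{lem:subjective.formulas} (applied to the interpretation $\kdint{H',T'}{\wv}$ and the subjective formula $\bot\leftarrow\varphi$) gives that $\kdint{H',T'}{\wv}\models\bot\leftarrow\varphi$ holds iff $\wv^t$ is an \sfmodel of $\bot\leftarrow\varphi$ --- a condition that does not mention $H'$ or $T'$ at all. Since $\wv\cup\set{\tuple{H,T}}$ is non-empty (it contains $\tuple{H,T}$), the universal quantification over $\tuple{H',T'}$ collapses to this single condition, and I conclude that $\cI$ is a model of $\bot\leftarrow\varphi$ iff $\wv^t$ is an \sfmodel of $\bot\leftarrow\varphi$, which is exactly the claim.

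I do not anticipate any real obstacle: the proof is essentially one application of Lemma~\ref{lem:subjective.formulas}. The only two points worth stating explicitly are (i) that $\bot\leftarrow\varphi$ inherits subjectivity from $\varphi$, so the lemma applies, and (ii) that non-emptiness of $\wv$ is what lets one pass from ``$\kdint{H',T'}{\wv}\models\bot\leftarrow\varphi$ for all $\tuple{H',T'}\in\wv\cup\set{\tuple{H,T}}$'' to the single subjective condition. If one prefers a self-contained argument not routed through Lemma~\ref{lem:subjective.formulas}, one can instead unfold the $\to$-clause directly: $\kdint{H',T'}{\wv}\models\varphi\to\bot$ reduces, using $\kdint{H',T'}{\wv}\not\models\bot$ and $\kdint{T'}{\wv^t}\not\models\bot$, to ``$\kdint{H',T'}{\wv}\not\models\varphi$ and $\kdint{T'}{\wv^t}\not\models\varphi$'', and then Lemma~\ref{lem:subjective.formulas} on both conjuncts (the second over the already-total view $\wv^t$, whose $t$-projection is itself) identifies both with ``$\wv^t$ is not an \sfmodel of $\varphi$'', yielding the same conclusion.
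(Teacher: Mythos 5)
Your proof is correct, and it reaches the paper's conclusion by a slightly different decomposition. The paper first unfolds the implication clause: via Proposition~\ref{prop:negation.there} it reduces $\kdint{H',T'}{\wv}\models\bot\leftarrow\varphi$ to $\kdint{T'}{\wv^t}\not\models\varphi$, and only then applies Lemma~\ref{lem:subjective.formulas} to $\varphi$ itself, turning the quantification over the evaluation points into the single condition that $\wv^t$ is not an \sfmodel of $\varphi$, which it finally repackages as ``$\wv^t$ is an \sfmodel of $\bot\leftarrow\varphi$''. You instead observe that $\bot\leftarrow\varphi$ is itself a formula all of whose atoms lie in the scope of $\bL$ (the constant $\bot$ contributes none) and apply Lemma~\ref{lem:subjective.formulas} once to the whole constraint, after which the universal quantification over $\wv\cup\set{\tuple{H,T}}$ collapses because the resulting condition no longer depends on the evaluation point. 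Your route is marginally shorter and avoids Proposition~\ref{prop:negation.there}, at the cost of leaning on the inductive step of Lemma~\ref{lem:subjective.formulas} for implications with $\bot$ as consequent, which the paper instead handles explicitly through the negation-there proposition. Your closing remark essentially reconstructs the paper's own argument, so the two proofs are interchangeable.
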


\begin{proof}
By definition, we have that
$\cI$ is a model of $\bot\leftarrow\varphi$
\\iff
$\tuple{\wv,H',T'} \models \bot\leftarrow\varphi$
for all $\tuple{H',T'} \in \wv \cup\set{\tuple{H,T}}$
\\iff
$\tuple{\wv^t,T',T'} \not\models \varphi$
for all $\tuple{H',T'} \in \wv \cup\set{\tuple{H,T}}$
\\iff
$\wv^t$ is not an \sfhtmodel of $\varphi$ (Lemma~\ref{lem:subjective.formulas})
\\iff
$\wv^t$ is an \sfhtmodel of $\bot\leftarrow\varphi$
\\iff
$\wv^t$ is an \sfmodel of $\bot\leftarrow\varphi$.\qed
\end{proof}

\begin{lemma}\label{lem:theory.with.subjective.constraints}
Let $\Gamma$ be a theory and $\varphi$ be a formula in which every atom is in the scope of the modal operator~$\bL$
and $\cI=\tuple{\wv,H,T}$ be some interpretation.
Then, $\cI$ is a \kdequilibrium model of $\Gamma \cup \set{\bot\leftarrow\varphi}$ iff
$\cI$ is a \kdequilibrium model of $\Gamma$
and
$\wv^t$ is an \sfmodel of $\bot\leftarrow\varphi$.
\end{lemma}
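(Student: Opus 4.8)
The plan is to reduce the statement to two facts already available. The first is Lemma~\ref{lem:subjective.constraints}: since $\varphi$ is subjective, whether an interpretation $\tuple{\wv',H',T'}$ is a belief model of $\bot\leftarrow\varphi$ depends only on the projection $(\wv')^t$, not on any ``here'' component. The second is that the order $\preceq$ of Definition~\ref{def:int.prec} leaves the ``there'' part untouched: if $\cI'=\tuple{\wv',H',T'}\preceq\cI=\tuple{\wv,H,T}$ then the first clause of Definition~\ref{def:int.prec} gives $T'=T$, while the other two clauses force $(\wv')^t=\wv^t$. Combining these, every belief model $\cI'\prec\cI$ is a belief model of $\bot\leftarrow\varphi$ exactly when $\cI$ is, so adding the constraint to $\Gamma$ cannot discriminate between $\cI$ and anything strictly $\prec$-below it. I will assume throughout that $\cI$ is total, since otherwise it is an equilibrium belief model of no theory and both sides of the claimed equivalence fail.

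For the right-to-left direction I would assume that $\cI$ is an equilibrium belief model of $\Gamma$ and that $\wv^t$ is an \sfmodel of $\bot\leftarrow\varphi$. Lemma~\ref{lem:subjective.constraints} applied to $\cI$ then gives that $\cI$ is a belief model of $\bot\leftarrow\varphi$, hence of $\Gamma\cup\set{\bot\leftarrow\varphi}$. Minimality is immediate and needs nothing from the previous paragraph: any belief model of $\Gamma\cup\set{\bot\leftarrow\varphi}$ is in particular a belief model of $\Gamma$, so a belief model $\cI'\prec\cI$ of the larger theory would already contradict $\cI$ being an equilibrium belief model of $\Gamma$.

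For the left-to-right direction I would assume that $\cI$ is an equilibrium belief model of $\Gamma\cup\set{\bot\leftarrow\varphi}$. Then $\cI$ is a belief model of $\Gamma$ and of $\bot\leftarrow\varphi$, and Lemma~\ref{lem:subjective.constraints} turns the latter into the statement that $\wv^t$ is an \sfmodel of $\bot\leftarrow\varphi$. It remains to show $\cI$ is $\preceq$-minimal among belief models of $\Gamma$ alone. Suppose some belief model $\cI'=\tuple{\wv',H',T'}$ of $\Gamma$ satisfied $\cI'\prec\cI$. By the first paragraph $(\wv')^t=\wv^t$, so Lemma~\ref{lem:subjective.constraints} applied to $\cI'$ shows that $\cI'$ is also a belief model of $\bot\leftarrow\varphi$, hence of $\Gamma\cup\set{\bot\leftarrow\varphi}$ --- contradicting that $\cI$ is an equilibrium belief model of that theory.

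I do not expect a real obstacle: the lemma is essentially bookkeeping. The two points deserving care are checking that $\preceq$ really does preserve the ``there''-projection $\wv^t$ (this is precisely what makes the truth value of a subjective constraint constant along $\prec$-chains below $\cI$), and noting that Lemma~\ref{lem:subjective.constraints} applies to the witness $\cI'$, which need not be total --- but that lemma is stated for an arbitrary interpretation, so there is no issue.
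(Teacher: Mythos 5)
Your proposal is correct and follows essentially the same route as the paper's proof: both directions hinge on Lemma~\ref{lem:subjective.constraints} together with the observation that $\cI'\prec\cI$ forces $(\wv')^t=\wv^t$, so the subjective constraint cannot separate $\cI$ from any strictly smaller belief model. Your write-up is in fact slightly more careful than the paper's (which contains a couple of typos in the left-to-right direction), but the argument is the same.
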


\begin{proof}
Assume first that $\cI$ is a \kdequilibrium model of $\Gamma \cup \set{\bot\leftarrow\varphi}$.
Then, $\cI$ is a model of $\Gamma$ and a model of $\bot\leftarrow\varphi$.
From Lemma~\ref{lem:subjective.constraints}, the latter implies that
$\wv^t$ is an \sfmodel of $\bot\leftarrow\varphi$.
Suppose, for the sake of contradiction, that $\cI$ is a \kdequilibrium model of $\Gamma$ and, thus, that there is some model $\cI' = \tuple{\wv',H',T'}$ of $\Gamma$ such that $\cI' \prec \cI$.
Then, $\cI' \prec \cI$ implies that $(\wv')^t = \wv$ and, thus, 
from Lemma~\ref{lem:subjective.constraints} and the fact that
$\wv^t$ is an \sfmodel of $\bot\leftarrow\varphi$,
it follows that $\cJ$ is a model of $\bot\leftarrow\varphi$
which is a contradiction with the fact that $\cI$ is a \kdequilibrium model of $\Gamma \cup \set{\bot\leftarrow\varphi}$.

The other way around.
Assume that $\cI$ is a \kdequilibrium model of $\Gamma$
and
$\wv^t$ is an \sfmodel of $\bot\leftarrow\varphi$.
From Lemma~\ref{lem:subjective.constraints}, we have that
$\cI$ is a model of $\bot\leftarrow\varphi$
and, thus, a \kdequilibrium model of $\Gamma \cup \set{\bot\leftarrow\varphi}$.\qed
\end{proof}

\begin{Proofof}{\ref{prop:properties}}
For \emph{Property~\ref{property:supraASP}}, note that since we are dealing with propositional theories there is no occurrence of $\bL$ and, thus, we can apply Proposition~\ref{prop:neg-L} and Theorem~\ref{thm:g91.weak} to see that the autoepistemic and G91 world views coincide.
Then, just note that, since there is no occurrence of $\bL$, we have that $\Gamma^\wv = \Gamma$
and, thus $\wv$ is an autoepistemic world view of $\Gamma$ iff $\wv$ is a G91-world view of $\Gamma$ iff $\wv = \SM[\Gamma^\wv] = \SM[\Gamma]$.
\\[10pt]
For \emph{Property~\ref{property:supraS5}}, from Corollary~\ref{cor:autoepistemic.are.s5equilibrium},
it follows that every autoepistemic world view~$\wv$ is also an \sfequilibrium model and it is easy to check that every \sfequilibrium model is also an \sfmodel.
\\[10pt]
For \emph{Property~\ref{property:constraint.monotonicity}},
we have that $\wv$ is an autoepistemic world view of~$\Gamma\cup\set{\bot\leftarrow\varphi}$ iff the following equality holds:
\begin{gather*}
\wv \ \ = \ \ \setm{ T }{ \kdint{T}{\wv } \in \MEQ[ \Gamma \cup\set{\bot\leftarrow\varphi} ] }
\end{gather*}
Furthermore, from Lemma~\ref{lem:theory.with.subjective.constraints} and the fact that 
every atom in $\varphi$ is in the scope of the modal operator~$\bL$,
it follows that
$\kdint{T}{\wv } \in \MEQ[ \Gamma \cup\set{\bot\leftarrow\varphi} ]$
iff
$\kdint{T}{\wv } \in \MEQ[ \Gamma ]$ 
and $\wv$ is \sfmodel of $\bot\leftarrow\varphi$.
Hence, we have that
$\wv$ is an autoepistemic world view of~$\Gamma\cup\set{\bot\leftarrow\varphi}$ iff the following equality holds:
\begin{gather*}
\wv \ \ = \ \ \setm{ T }{ \kdint{T}{\wv } \in \MEQ[ \Gamma ] }
\end{gather*}
and $\wv$ is \sfmodel of $\bot\leftarrow\varphi$.
\end{Proofof}

\end{document}